\documentclass[sigconf, anonymous=false]{acmart}
\makeatletter
\@ACM@authorversiontrue
\makeatother


\AtBeginDocument{%
  \providecommand\BibTeX{{%
    \normalfont B\kern-0.5em{\scshape i\kern-0.25em b}\kern-0.8em\TeX}}}

\copyrightyear{2024}
\acmYear{2024}
\setcopyright{acmlicensed}\acmConference[BSCI:'24]{The 6th ACM International Symposium on Blockchain and Secure Critical Infrastructure}{July 2, 2024}{Singapore, Singapore}
\acmBooktitle{The 6th ACM International Symposium on Blockchain and Secure Critical Infrastructure (BSCI:'24), \linebreak July 2, 2024 Singapore, Singapore}
\acmDOI{10.1145/3659463.3660021}
\acmISBN{979-8-4007-0638-7/24/07}
\newcommand{\wg}[1]{#1}




\begin{document}

\title{A Game Theoretic Analysis of Validator Strategies in Ethereum 2.0}

\author{Chien-Chih Chen}
\email{j2255che@uwaterloo.ca}
\orcid{0000-0003-1170-1796}
\affiliation{%
  \department{Department of Electrical and Computer Engineering}
  \institution{University of Waterloo}
  \streetaddress{200 University Ave W}
  \city{Waterloo}
  \state{Ontario}
  \country{Canada}
  \postcode{N2L 3G1}
}

\author{Wojciech Golab}
\email{wgolab@uwaterloo.ca}
\orcid{0000-0002-8891-256X}
\affiliation{%
  \department{Department of Electrical and Computer Engineering}
  \institution{University of Waterloo}
  \streetaddress{200 University Ave W}
  \city{Waterloo}
  \state{Ontario}
  \country{Canada}
  \postcode{N2L 3G1}
}

\begin{abstract}
Ethereum 2.0 is the second-largest cryptocurrency by market capitalization and a widely used smart contract platform. Therefore, examining the reliability of Ethereum 2.0's incentive mechanism is crucial, particularly its effectiveness in encouraging validators to adhere to the Ethereum 2.0's protocol. This paper studies the incentive mechanism of Ethereum 2.0 and evaluates its robustness by analyzing the interaction between block proposers and attesters in a single slot. To this end, we use Bayesian games to model the strategies of block proposers and attesters and calculate their expected utilities. Our results demonstrate that the Ethereum 2.0 incentive mechanism is incentive-compatible and promotes cooperation among validators. We prove that a Bayesian Nash equilibrium and an ex ante dominant strategy exist between the block proposer and attesters in a single slot. Our research provides a solid foundation for further analysis of Ethereum 2.0's incentive mechanism and insights for individuals considering participation as a validator in Ethereum 2.0.
\end{abstract}

\begin{CCSXML}
<ccs2012>
   <concept>
       <concept_id>10003752.10010070.10010099.10010100</concept_id>
       <concept_desc>Theory of computation~Algorithmic game theory</concept_desc>
       <concept_significance>500</concept_significance>
       </concept>
   <concept>
       <concept_id>10003033.10003068.10003078</concept_id>
       <concept_desc>Networks~Network economics</concept_desc>
       <concept_significance>500</concept_significance>
       </concept>
   <concept>
       <concept_id>10002978.10003029.10003031</concept_id>
       <concept_desc>Security and privacy~Economics of security and privacy</concept_desc>
       <concept_significance>500</concept_significance>
       </concept>
 </ccs2012>
\end{CCSXML}

\ccsdesc[500]{Theory of computation~Algorithmic game theory}
\ccsdesc[500]{Networks~Network economics}
\ccsdesc[500]{Security and privacy~Economics of security and privacy}

\keywords{Incentive Compatibility, Bayesian Games, Proof of Stake, Blockchain Security, Device Offline, Eth2}


\settopmatter{printfolios=true}
\maketitle
\section{Introduction}
As computing tasks have become more complex and networks have become faster, distributed systems are widely adopted in building computer software and applications. The ownership of traditional distributed systems is centralized, which means certain central companies or authorities control them. This centralization is evident in how, although the physical servers are deployed in different regions, the data control belongs to a specific enterprise. Conventional centralized distributed systems have a central authority that is not always reliable, which promotes the development of the decentralized distributed system \cite{nakamoto2008bitcoin}. Decentralization means that a single enterprise no longer owns data; instead, data governance is delegated to each participating entity \cite{raval2016decentralized, cai2018decentralized}. In 2008, Nakamoto consensus was proposed in the Bitcoin white paper \cite{nakamoto2008bitcoin}, which is recognized as a significant achievement in realizing the zero-trust decentralized payment system. However, Bitcoin lacks the concept of smart contracts, which limits its application. Therefore, Ethereum \cite{vitalikannotatedspec} was launched in 2015, providing a Turing complete programming language called Solidity so that people can build various applications in Ethereum. 

Permissionless blockchains, also known as public blockchains, allow anyone to participate in the consensus decision process used to verify transactions \cite{ 8525392, helliar2020permissionless}. Bitcoin \cite{nakamoto2008bitcoin} and Ethereum \cite{vitalikannotatedspec} are examples of permissionless blockchains. In such blockchains, incentive mechanisms are crucial because they must motivate miners or validators to contribute their computing power to maintain the network and validate transactions. Hence, an appropriate incentive mechanism needs to be designed to encourage participants to act honestly and to deter the presence of Byzantine nodes. This area of research falls within the domain of mechanism design \cite{alma9938932363505162}.

One of the core elements of mechanism design is incentive compatibility, which ensures that agents participating in games adhere to established rules \cite{alma9938932363505162}. In Ethereum 2.0 (Eth2)\footnote{In this document, \emph{Ethereum 2.0} (also referred to as \emph{Eth2}) denotes the version of Ethereum following the transition from a Proof of Work (PoW) to a Proof of Stake (PoS) consensus mechanism. In 2022, the Ethereum Foundation renamed the original PoW-based Ethereum to the \emph{"execution layer"} and Eth2 to the \emph{"consensus layer"}. To avoid confusion with the term \emph{"consensus layer"}, this document continues to use \emph{Ethereum 2.0} or \emph{Eth2} to describe the post-transition Ethereum.
}, validators are required to keep their devices operational 24 hours a day, seven days a week, to ensure the operation of Eth2's consensus rules \cite{vitalikannotatedspec}. If maintaining connectivity to execute validators' tasks becomes the best response in a Bayesian Nash equilibrium (BNE) or a dominant strategy for validators, then Eth2's incentive mechanism can be considered to have incentive compatibility. As a result of this consideration, this study investigates the incentive mechanism for validators in Eth2 and analyzes its compatibility. We focus on the strategic interaction between block proposers and attesters within a single slot. The research is conducted in two steps: First, we model the interaction between proposers and attesters within a single slot as a Bayesian game and conclude that cooperation is not only a Bayesian Nash equilibrium (BNE) but also an ex ante dominant strategy for a validator. Next, based on the equilibrium analysis, we conclusively state that the current reward and penalty schemes in Eth2 encourage block proposers and attesters to maintain the connectivity of their devices. Maintaining connectivity, identified as the best strategy in a Bayesian game, ensures cooperative behavior among validators, thereby rendering the incentive mechanism of Eth2 incentive-compatible.

Our study offers three contributions to the literature on validator participation in Eth2. \textbf{Firstly}, to equip the reader with the necessary background for understanding this paper, we consolidate key information on the reward, penalty, and cost model of Eth2 from various sources. \textbf{Secondly}, we introduce a game theoretical approach to analyzing validators' behavior and decision-making in Eth2 using a Bayesian game. Our analysis demonstrates that reaching a BNE for a block proposer\footnote{In the Eth2, a singular validator is exclusively designated as the block proposer for each discrete slot.}  and other attesters in a slot is contingent upon validators being aligned with cooperation and adherence to the rules of Eth2. Moreover, cooperation is the ex ante dominant strategy in this circumstance. \textbf{Finally}, based on the analysis of the BNE and the ex ante dominant strategy, we conclude that the incentive mechanism of Eth2 is incentive-compatible for individual block proposers and attesters within a slot. To the best of our knowledge, this represents the first attempt to use Bayesian games to analyze the impact of validator strategies on the core incentive mechanism of Eth2. In essence, the findings of this research provide significant guidance to ensure that validators can maximize their utilities by conforming rather than deviating from the Eth2 protocol. This guidance is crucial for the design and implementation of effective incentive mechanisms for validator participation in the next generation of Ethereum.

The remainder of this paper is structured as follows: Section \ref{sect_background_incentive} provides a comprehensive overview of the incentive mechanism in Eth2, laying the groundwork for subsequent analyses. Section \ref{sect_et2_bayesian_game_use_this_in_intro} introduces and elaborates on the Bayesian game model that underpins our approach. Next, in Section \ref{sect_incentive_analysis}, we conduct a detailed game analysis, highlighting key findings and implications. Subsequently, Section \ref{sect_related_work} reviews relevant literature, situating our work within the broader research context. Finally, Section \ref{sect_conclusion} summarizes our main conclusions and outlines directions for future research.

\section{Background: Incentive Mechanism in Eth2}
\label{sect_background_incentive}
To become a validator in Eth2, one needs to deposit at least 32 ETH. The validator’s total stake is referred to as the \textbf{total balance (TB)}. With the consensus process executing, validators will possibly receive rewards or penalties calculated at the end of each epoch. As a result, the \textbf{effective balance (EB)} is the latest value calculated by their $TB$, rewards and penalties\footnote{Note that the effective balance is rounded to the nearest 1 ETH increment. For example, a total balance of 31.4 and 31.6 ETH will be rounded to 31 and 32 ETH, respectively, as the effective balance.} \cite{ vitalikannotatedspec, edgington_2022}.  $EB$ represents the influence of the voting power of a validator. For example, the impact of a validator that possesses 32 ETH is more significant than another one that only has 30 ETH. However, in contrast to $TB$, a validator can have a maximum of 32 ETH as its $EB$. This limitation, designed by Eth2, ensures a balanced distribution of voting power for each validator. 

In Eth2, "finality" denotes the irreversible nature of blockchain transactions, secured by mandating a significant expenditure of ETH to alter any portion of a block \cite{vitalikannotatedspec}. Eth2 divides time into \emph{slots}, each lasting 12 seconds. These slots are organized into epochs, with each epoch comprising 32 slots and totalling approximately 6.4 minutes. To achieve "Finality" for each epoch, Eth2 utilizes "Checkpoint," composed of two valid blocks in the format $(source, target)$. $Target$ means the first block of the current epoch. If a valid block receives at least two-thirds of the $EB$ of all validators to mark it as the $target$ in a checkpoint, its status will become "justified."  Next, a "justified" valid block’s status will become "Finalized" if it acquires at least two-thirds of the $EB$ of all validators who vote the "justified" valid block as the $source$, which represents the most recent justified block \cite{buterin2017casper, buterin2020combining}. For instance, if the checkpoint of epoch zero is $(N/A, block \ \#1)$, the $target$ = block \#1 and block \#1’s status will become "justified." Subsequently,  if the checkpoint of epoch one is $(block\ \#1, block\ \#33)$, the $source$ = block \#1’s status will become "finalized" and the $target$ = block \#33’s status will transfer to "justified." 

Aside from "finality," Eth2 also uses the "Head" block to manage its fork. The "Head" block is the most recently created block verified by attesters. When a block proposer plans to propose a new block, it must link to the "Head" block. Then, the new block will be considered valid, and Eth2 can ensure that the new block is built on the latest block agreed upon by two-thirds of the attesters' stakes \cite{buterin2020combining, cassez2022formal}. In a nutshell, validators are requested to vote on $source$, $target$, and $head$ when they are designated as attesters.

In addition to their roles as block proposers or attesters, validators in Eth2 may also be selected to join the "sync committee." This committee is comprised of 512 validators who are chosen every 256 epochs (or approximately every 27 hours) and are responsible for continuously signing block headers for each slot. In Eth2, validators receive rewards or penalties based on their performance when executing different tasks. These incentives include rewards for block proposers who successfully create and include a block in the chain, rewards for attesters who correctly vote on the validity of blocks, and penalties for validators who are offline or violate the protocol. In the following sections, we will discuss the incentive mechanisms for validators in Eth2, including rewards and penalties they can receive.

\subsection{Reward, Penalty, and Cost Model of Validators in Eth2} \label{sect_reward_penalty_cost_model}
In Eth2, validators conduct their duty as attesters, joining a sync committee or proposers to get rewards. Meanwhile, validators will be punished or slashed if they do not do their duties correctly or behave maliciously \cite{buterin2020combining, buterin2020incentives}. To analyze Eth2's mechanism, we need to disassemble the rewards, punishments, and costs of Eth2. In this paper, we summarize the reward, penalty, and cost models for validators in Eth2 based on the technical specifications and research by Buterin (2020) \cite{vitalikannotatedspec}, Edgington (2023) \cite{edgington_2022} and McDonald (2020) \cite{jim2020}. The content of the sects. \ref{sect_bk_reward_model} to \ref{sect_bk_cost_model} are grounded in an analytical summary of various sources, which serve as the foundation for our models and findings.

\subsubsection{Reward Model} \label{sect_bk_reward_model}
Validators in Eth2 can receive three types of block rewards: \emph{attestation rewards}, \emph{proposer rewards}, and \emph{sync committee rewards} \cite{vitalikannotatedspec}. However, before introducing each reward, we must define two essential terms: the \textbf{effective balance increment (EBI)} and the \textbf{base reward factor (BR\_FACTOR)}.

In Eth2, the \emph{BR\_FACTOR} is set to 64 by \cite{vitalikannotatedspec}. The \emph{EBI} is a value that is set to 1 in the same source. These two values are used in the calculation of the \textbf{base reward per increment (BR)}, which is given by the following Equation \ref{eq_base_rewrd} derived from \cite{vitalikannotatedspec}:

\begin{equation}
\label{eq_base_rewrd}
\begin{split}
BR = \frac{EBI \cdot BR\_FACTOR} {\sqrt{TB}}
\end{split}
\end{equation}

Let us take a look at an example provided by \cite{edgington_2022}. Assume there are 500,000 validators, and each validator holds 32 ETH as their $\textbf{effective balance (EB)}$. In Eth2, \textbf{GWei} = $10^{-9}$ ETH, which is the smallest unit of the token. Therefore, the \emph{BR} is around 505.96 GWei, which is calculated by the following process:

\begin{equation}
\label{eq_base_rewrd_calculation_procedd}
\begin{split}
BR = \frac{1 \cdot 10^9 \cdot 64} {\sqrt{32 \cdot 10^9 \cdot 500,000}} \approx 505.96
\end{split}
\end{equation}

After obtaining the value of the \emph{BR}, as derived in the preceding equation, we may elaborate on the three types of rewards available to validators in Eth2: attestation rewards, proposer rewards, and sync committee rewards.

\paragraph{\textbf{Attestation Rewards}}
There are three types of \emph{attestation rewards} that validators can receive in each epoch if they are not selected to be proposers or join the sync committee. First, if a validator $i$ votes for the source correctly\footnote{The correctness of voting means that a validator votes for the same block as other validators possess at least two-thirds of the EB.}, it can receive the \emph{reward of the voting source} (\emph{$R_i^{AS}$}). This reward is calculated using Equation \ref{eq_reward_at_source}, where \emph{W${source}$} is the total weight of all validators that voted correctly for the source, \emph{W$_{total}$} is the total weight of all validators, \emph{EB} is the effective balance of the validator, and \emph{BR} is the base reward per increment. According to the Eth2 specification \cite{vitalikannotatedspec}, $W_{source}$ is 14, and $W_{total}$ is 64. The \emph{EB} is the amount of stake that a validator possesses. This paper assumes that each validator has an \emph{EB} of 32.

\begin{equation}
\label{eq_reward_at_source}
\begin{split}
R_i^{AS} = \frac{W_{source}}{W_{total}} \cdot EB \cdot BR
\end{split}
\end{equation}

Second, a validator $i$ could get the \emph{reward of the voting target} (\emph{R$_i^{AT}$}) in Equation \ref{eq_reward_at_target} if it votes for the target correctly. The \emph{$W_{target}$} is the weight of rewards when validators vote target correctly, which is set to 26 in \cite{vitalikannotatedspec}.

\begin{equation}
\label{eq_reward_at_target}
\begin{split} 
R_i^{AT} = \frac{W_{target}}{W_{total}} \cdot EB \cdot BR
\end{split}
\end{equation}

Third, when a validator $i$ correctly votes for the head, it is awarded the \emph{reward of the voting head} (\emph{R$_i^{AH}$}), as delineated in Equation \ref{eq_reward_at_head}. The reward weight for accurately voting the head, denoted as \emph{W$_{head}$}, is set to 14 in \cite{vitalikannotatedspec}, identical to \emph{W$_{source}$}. Given that \emph{W$_{head}$} is equal to \emph{W$_{source}$}, it follows that $R_i^{AH}$ is commensurate with $R_i^{AS}$. 

\begin{equation}
\label{eq_reward_at_head}
\begin{split} 
R_i^{AH} = \frac{W_{head}}{W_{total}} \cdot EB \cdot BR
\end{split}
\end{equation}

In summary, it can be determined that within a given slot, validator $i$ may receive the maximum \emph{attestation rewards}, represented by the sum of \emph{R$_i^{AS}$}, \emph{R$_i^{AT}$}, and \emph{R$_i^{AH}$}, which are collectively denoted as \emph{R$_i^{A}$} and calculated by Equation \ref{eq_reward_at_sum}.

\begin{equation}
\label{eq_reward_at_sum}
\begin{split} R_i^{A} = \frac{W_{source} + W_{target} + W_{head}}{W_{total}} \cdot EB \cdot BR
\end{split}
\end{equation}

\paragraph{\textbf{Sync Committee Rewards}}
If a validator $i$ completes a sync committee task in a given slot, it can receive a \emph{Sync Committee Reward} (\emph{$R_i^C$}), which is calculated using Equation \ref{eq_reward_sync_sum}. The value of \emph{$W_{sync}$}, which represents the number of reward weights earned by validators for accurately executing sync committee tasks, is set to two in \cite{vitalikannotatedspec}. 

\begin{equation}
\label{eq_reward_sync_sum}
\begin{split} 
R_i^{C} = \frac{1}{32 \cdot 512} \cdot \frac{W_{sync}}{W_{total}} \cdot N_V \cdot EB \cdot BR
\end{split}
\end{equation}

In Equation \ref{eq_reward_sync_sum}, \emph{$N_V$} is the total number of active validators in the Eth2, and the fraction $\frac{1}{32 \cdot 512}$ is because there are 512 validators in a sync committee, and each epoch has 32 slots.

\paragraph{\textbf{Proposer Rewards}} The reward given to the proposer, performed by a validator $i$, represented as $R_i^P$, can be divided into three types: presenting attesting data, called $R_i^{PA}$; synchronizing committee data, known as $R_i^{PC}$; and the tips from transactions included in its proposed block, denoted as $R_i^{PT}$. The calculation methods for $R_i^{PA}$, $R_i^{PC}$, and $R_i^{PT}$ are delineated in \cite{vitalikannotatedspec} and will be explained in this section.

The calculation for $R_i^{PA}$ is shown in Equation \ref{eq_reward_rpa}, where $N_A$ represents the total number of attesters in an epoch, and $W_{proposer}$ in Equation \ref{eq_reward_rpa} refers to the number of weights of rewards when a validator includes attesting data in its block. It is noteworthy that in \cite{vitalikannotatedspec}, $W_{proposer}$ is set to a fixed value of eight. Note that the $R_i^{PA}$ is a fraction of $R_i^A$, as illustrated in Equation \ref{eq_reward_rpa}.

\begin{equation}
\label{eq_reward_rpa}
\begin{split}
R_i^{PA} = \frac{W_{proposer}}{W_{total} - W_{proposer}} \cdot N_A \cdot R_A
\end{split}
\end{equation}

Also, the calculation method for $R_i^{PC}$ is defined in Equation \ref{eq_reward_rpc} in \cite{vitalikannotatedspec}. The constant 512 in Equation \ref{eq_reward_rpc} represents the number of validators in a sync committee.

\begin{equation}
\label{eq_reward_rpc}
\begin{split}
R_i^{PC} = \frac{W_{proposer}}{W_{total} - W_{proposer}} \cdot R_i^C \cdot 512
\end{split}
\end{equation}

Moreover, Eth2 adopts the EIP-1559 model \cite{roughgarden2020transaction} to calculate its transaction fees, which are composed of a \emph{base fee} and \emph{tips}, both paid by clients. A \emph{Base fee} is a type of dynamically adjusted fee used to balance the demand and supply of transactions in Eth2. Each Eth2 transaction is required to pay a \emph{base fee}, which is calculated based on the amount of gas consumed by a transaction. However, the \emph{base fee} is burned\footnote{The goal of burning the \emph{base fee} in Eth2 is to reduce inflation by removing Ether from circulation \cite{roughgarden2020transaction}.} rather than being delivered to block proposers to control the total amount of Ether. In contrast to the \emph{base fee}, \emph{tips} are directly delivered to a block proposer as an incentive for the block proposer to prioritize selecting this transaction into their proposed block. Hence, we can calculate $R_i^{PT}$ by Equation \ref{eq_transaction_fee_rewards} since $R_i^{PT}$ is the summation of all \emph{tips} in a proposed block. 

\begin{equation}
\label{eq_transaction_fee_rewards}
R_i^{PT} = \sum_{tx \in B_i} tip(tx)
\end{equation}

In Equation \ref{eq_transaction_fee_rewards}, $B_i$ represents the set of transactions included by proposer $i$ in its proposed block and $tip(tx)$ denotes the function of calculating \emph{tips} associated with each transaction $tx$. For detailed calculations of the \emph{base fee} and \emph{tips}, as well as their economic and technical impacts on Eth2, please refer to \cite{roughgarden2020transaction} and \cite{liu2022empirical}. Combining all components, the complete proposer reward $R_i^P$ is given by Equation \ref{eq_reward_rp_sum}.

\begin{equation}
\label{eq_reward_rp_sum}
\begin{split}
R_i^{P} = R_i^{PA} + R_i^{PC} + R_i^{PT}
\end{split}
\end{equation} 

So far, we have introduced the three types of rewards that validators could receive. It is important to note that, in Eth2, if validators exceed the time limits for executing tasks, their rewards will be reduced or taken away partially or entirely based on the extent of the delay \cite{vitalikannotatedspec, edgington_2022}. This paper assumes that validators execute their tasks \textbf{within} the allotted time frame to receive total rewards for each particular task.

\subsubsection{Punishment Model} \label{sect_bk_punishment_model}
Eth2 proposes two types of punishments. First, a validator will receive \emph{penalties} if it does not execute the tasks of being an attester, a member of the sync committee, or a block proposer. Second, if a validator violates Eth2's consensus protocol \cite{buterin2020combining}, it will be charged with \emph{slashing} and asked to eject and not be a validator anymore. In a nutshell, the \emph{penalties} are more moderate than the \emph{slashing}\footnote{Violations that lead to penalties or slashing are either automatically detected by the Eth2 consensus layer or reported by validators who collect evidence of such infractions, thereby enhancing the reliability of Eth2's detection capabilities.} \cite{vitalikannotatedspec}.

\paragraph{\textbf{Penalties}}
The \emph{penalties for attester} (\emph{$P_i^{A}$}) and \emph{penalties for sync committee} (\emph{$P_i^{C}$}) are two types of penalties that a validator $i$ might receive, as defined in \cite{vitalikannotatedspec}. If an attester votes the source incorrectly, it will get penalties equal to \emph{$R_i^{AS}$}. Also, an attester will get penalties equal to \emph{$R_i^{AT}$} if it votes the wrong target. Consequently, the maximum \emph{$P_i^{A}$} that an attester will receive is equal to \emph{$R_i^{AS}$} + \emph{$R_i^{AT}$}. Besides, there is an asymmetric punishment scheme here: an attester will not get penalties when voting incorrectly for the head, which we will discuss more in the Sect. \ref{sect_incentive_compatibility}. On the other hand, if a validator fails to execute the task of the sync committee precisely, it will get penalties, which are \emph{$P_i^C$}, equal to \emph{$R_i^{C}$}. For better understanding, we display the \emph{$P_i^{A}$} and \emph{$P_i^C$} in Equations \ref{eq_penalty_pa} and \ref{eq_penalty_pc}.

\begin{equation}
\label{eq_penalty_pa}
\begin{split}
P_i^{A} = R_i^{AS} + R_i^{AT}
\end{split}
\end{equation}

\begin{equation}
\label{eq_penalty_pc}
\begin{split}
P_i^{C} = R_i^C
\end{split}
\end{equation}

\paragraph{\textbf{Inactivity leak}} In Eth2, a mechanism known as "inactivity leak" is initialized if validators fail to confirm any checkpoints for four consecutive epochs. When "inactivity leak" starts, attesters will not receive rewards, and the inactive validators will be penalized by losing their stakes. However, blockchain proposers and sync committees will continue to receive rewards during the "inactivity leak" period. Once a checkpoint is confirmed by validators, the "inactivity leak" will end. In this paper, we utilize $P_i^{IL}$ to represent the penalty for validator $i$ during the "inactivity leak" period.

\paragraph{\textbf{Slashing}}
With the Eth2 framework, slashing is composed of three distinct types and forms a critical part of the incentive mechanism, specifically designed to deter and penalize malicious behaviour among validators. Validators are subjected to slashing only when they engage in actions that violate the established rules applicable to attesters or proposers \cite{vitalikannotatedspec}.
First, the \emph{initial slashing} (\emph{$S_i^{I}$}) will be applied to validators who violate Eth2's rules. In \cite{vitalikannotatedspec}, the \emph{$S_i^{I}$} is set to $\frac{1}{6}$. Second, a validator $i$ who is slashing will continue to receive the \emph{correlation of slashing} (\emph{$S_i^{C}$}). Third, before being ejected, \emph{attestation of slashing} (\emph{$S_i^{A}$}) continues being applied to the validator $i$ who is being slashed. However, it is crucial to highlight that this research is solely focused on the incentives for Eth2 validators and deliberately omits consideration of the slashing penalties for validators in breach of Eth2's protocol. Although related, slashing is distinct from the incentive mechanisms that motivate validators to participate in the Eth2 consensus protocol, which is the main focus of this study. Therefore, it is \textbf{outside the scope of this study}.

\subsubsection{Cost Model} \label{sect_bk_cost_model}
To the best of our knowledge, most of the literature discusses the execution costs in Eth2, such as the amount of \emph{gas} required for certain commands in Solidity, rather than explicitly addressing the cost analysis of validators in Eth2. Therefore, we conducted our analysis based on our understanding and presented a cost model that consists of the following components:

\paragraph{\textbf{Equipment Cost}}
A reliable infrastructure is necessary for becoming a Validator in Eth2. This infrastructure includes computer hardware, network bandwidth, operating systems, and application software, all of which incur associated costs. To account for these costs for a validator $i$, we refer to \cite{jim2020} to define the annual equipment cost $C_i^{AE}(y)$, as we denote $y$ as the number of years that agent $i$ serves as a validator in Eth2. As shown in Equation \ref{eq_cost_equipment}, $C_i^{AE}(y)$ consists of three major components: $C_i^S$ represents the initial cost of setting up the infrastructure, such as evaluating, installing, and testing the hardware and software, which is a one-time expense. $C_i^I(y)$ reflects the purchase cost of software and hardware, recognized as depreciation over their useful life. $C_i^O(y)$ covers the costs of maintaining the infrastructure, including irregular software and hardware upgrades, updates, or replacements. For a more accurate reflection of financial commitment over time, $C_i^{AE}(y)$ will be amortized over $y$.

\begin{equation}
\label{eq_cost_equipment}
C_i^{AE}(y) = \frac{C_i^S + C_i^I(y) + C_i^O(y)}{y}
\end{equation}

\paragraph{\textbf{Participation Cost}} The participation cost for becoming a validator $i$ in Eth2 is set to 32 ETH, denoted as $C^P(y)$, as cited in \cite{vitalikannotatedspec}. Here, we denote the annual participation cost as $C_i^{AP}(y)$. Although agent $i$ can withdraw staked ETH, we still consider this 32 ETH as an ongoing cost for a validator. It is amortized over the number of years $y$ that agent $i$ serves as a validator in Eth2, as shown in Equation \ref{eq_cost_participation}.

\begin{equation}
\label{eq_cost_participation}
C_i^{AP}(y) = \frac{C^P(y)}{y}
\end{equation}

\paragraph{\textbf{Execution Cost}}
This cost type refers to the energy consumption a validator $i$ might incur when executing its tasks, including offering attestations, joining a sync committee, or proposing blocks. We denote these three components of annual execution cost ($C_i^{AX}(y)$) as $C_i^A(y)$, $C_i^Q(y)$, and $C_i^B(y)$, respectively. As a result, we can calculate $C_i^{AX}(y)$ using Equation \ref{eq_cost_computation}. The $y$ in Equation \ref{eq_cost_computation} represents the number of years $y$ that agent $i$ serves as a validator in Eth2, and this period is used to amortize these costs.

\begin{equation}
\label{eq_cost_computation}
C_i^{AX}(y) = \frac{C_i^A(y) + C_i^Q(y) + C_i^B(y)}{y}
\end{equation} 

In sum, the total annual cost ($C_i^{AT}(y)$) that a validator $i$ incurs is calculated as follows:

\begin{equation}
\label{eq_cost_total}
C_i^{AT}(y) = C_i^{AE}(y) + C_i^{AP}(y) + C_i^{AX}(y)
\end{equation} 

We utilize $C_i^{AT}(y)$ to represent the annual total cost of a validator $i$. Knowing that an Eth2 epoch consists of 32 slots, with each slot lasting 12 seconds, we can estimate the cost of a validator $i$ in each epoch by calculating the number of epochs per year using the following equation:

\begin{equation}
\label{eq_numbers_of_epoch_per_year}
N_{epochs} = \frac{60 \cdot 60 \cdot 24 \cdot 365}{12 \cdot 32}
\end{equation}

Given the total annual cost $C_i^{AT}(y)$, the cost per epoch of a validator $i$, denoted by $C_i^{EP}(y)$, can be calculated by dividing the total annual cost by the number of epochs in a year, as shown in:

\begin{equation}
\label{eq_cost_of_each_epoch}
C_i^{EP}(y) = \frac{C_i^{AT}(y)}{N_{epochs}}
\end{equation}

Thus far, we have provided fundamental knowledge of the reward, punishment, and cost model of Eth2. In the subsequent sections, specifically Sections \ref{sect_et2_bayesian_game_use_this_in_intro} and \ref{sect_incentive_analysis}, we will selectively apply key equations from the range \ref{eq_base_rewrd} through \ref{eq_cost_of_each_epoch} to further our analysis. These selected equations are crucial for understanding the rewards, penalties, or slashing, as well as the conditions that trigger these outcomes for Eth2 validators. They serve as indispensable cornerstones for the arguments we present in this paper.
\section{The Eth2 Bayesian Game}
\label{sect_et2_bayesian_game_use_this_in_intro}
In the Eth2 protocol, validators are required to perform various tasks that maintain the security and integrity of the system, such as proposing blocks, attesting to blocks, or joining a sync committee. In this work, we focus on modeling the incentives of validators during \textbf{one independent slot}\footnote{We consider each slot as an independent Bayesian game, which is reasonable since each slot has a specific start and end time, and each validator may face different situations from its counterparts, possibly leading to different strategic choices.} in Eth2. Using a Bayesian game framework, we aim to gain insight into their possible actions within the protocol and design mechanisms to encourage positive behaviors while discouraging negative ones.

\subsection{Game Model} \label{sect_game_model}
We model the behavior of multiple agents in the Eth2 protocol using a Bayesian game framework. A specific type classifies each agent and has a set of available actions. At the start of the game, agents are unaware of their types, but they can observe them once the game begins. Agents then take actions simultaneously without knowing other agents' types. The game determines the payoff or "utility" for each agent based on all agents' types and actions. Our goal is to scrutinize the game and ascertain the existence of a BNE for all agents when they choose to cooperate. Further, this cooperation strategy constitutes a dominant strategy for all agents involved. Note that we focus on the interactions between validators when their types are either block proposers or attesters. This study concentrates on the interactions between validators and attesters in a single slot.

\begin{definition}[Bayesian Game \cite{shoham2008multiagent}]\label{def_bayesiangame_ethsecurity} A Bayesian game is built on top of a tuple $(N, A, \Theta, p, u)$, where:
 \begin{itemize}
     \item $N$ is a set of agents;
     \item $A$ = $A_1 \times \dots \times A_n$, where A$_i$ is a set of actions available to player $i$;
     \item $\Theta = \Theta_1 \times \dots \times \Theta_n$, where $\Theta_i$ is the type space of player $i$;
     \item $p: \Theta \mapsto [0,1]$ is a common prior over types; and
     \item $u$ = $(u_1,\dots\,u_n)$, where $u_i:A \times \Theta \mapsto \mathbb{R}$ is the utility function for player $i$.
 \end{itemize}
\end{definition}

\paragraph{\textbf{Agents and Actions}} 
In this paper, we model validators in Eth2 as rational agents who seek to maximize their profit, represented by their utility function. We consider a game where each agent has a set of available actions, specifically two possible actions: (i) cooperate, $ C $, which keeps the agent's device online, and (ii) deviate, $ D $, which turns the device offline at least one epoch to potentially save costs. The use of cloud or centralized cryptocurrency exchanges, which could also affect costs, is not considered in this model. The set of actions available to player $ i $ is denoted by $ A_i = \{C, D\} $, and an action profile for all agents is represented by $ (a_1, a_2, \ldots, a_n) $, where $ a_i $ indicates the action taken by player $ i $, and $ n $ is the total number of agents with $ n \geq 2 $. Actions taken by all agents except for player $ i $ are denoted by $ a_{-i} $, thus an action profile can be succinctly expressed as $ (a_i, a_{-i}) $.

\paragraph{\textbf{Types}} Recall that an epoch in the Eth2 protocol consists of 32 slots, each with its block committee (note that this is different from the sync committee). The block committee randomly selects a validator in each slot as the block proposer. Therefore, a validator could be either a block proposer or an attester during an epoch. There is another type, which is a member of the sync committee. However, we will not consider the type of validators when they are a member of the sync committee since the probability of being selected for this role is very low. We focus on modeling the interactions between block proposers and attesters and gain insight into their possible actions and strategies within the protocol.

With this context in mind, agents are divided into two types in our model setup. We represent the potential types of agent $i$ in a single slot by $\theta_i \in \{B, A \}$ can be either block proposer ($B$) or attester ($A$). This notation is employed to explore all possible strategic configurations within our theoretical framework. One critical point is that our analysis focuses on a one-shot game within a single slot without considering the relationships or interactions between consecutive slots. Despite this, our approach is applicable to any slot in the Eth2 protocol, treating each slot as an independent one-shot game scenario.

\paragraph{\textbf{Prior probabilities}}
We denote the probability of being selected as a block proposer or an attester as $p_B$ and $p_A$, respectively. Although agents are aware of these probabilities before the game starts, they cannot observe their types until the game begins.

\paragraph{\textbf{Utility functions}} We use $u_i$ to represent the utility function that maps action profiles, denoted by $a = (a_1, \dots, a_n)$, and types, represented by $\theta = (\theta_1, \dots, \theta_n)$, to real-number payoffs. In this paper, we consider a game with multiple validators, where each validator can be one of two types, $\theta \in \{B, A\}$, and each type has two distinct actions, $a \in \{C, D\}$. Hence, we could utilize $u_i(a, \theta)$ to calculate the utility of agent $i$.

\subsection{Strategies and Equilibria} \label{sect_game_model_equilibria}
A strategy indicates the contingency plan of each agent $i$ regarding how they play the game. In this study, we use $\Delta(A_i)$ to represent the collection of all possible probability distributions over $A_i$, where $A_i$ is the set of actions available to agent $i$. The function $s_i : \Theta \rightarrow \Delta(A_i)$ is a mapping from agent $i$'s type to the probability distribution of agent $i$'s actions. $S_i$ is the complete collection of strategies available to agent $i$. Then, $s_i(a_i | \theta_i)$ is the probability that agent $i$ will choose action $a_i$ while following strategy $s_i$, given that their type is $\theta_i$. Finally, the vector $s = (s_1, \ldots, s_n) \in S$ represents the strategies for all agents, where $S = S_1 \times \ldots \times S_n$ is the set of all possible strategy profiles, with $n \geq 2$ indicating the number of agents.

\paragraph{\textbf{Expected Utilities}}
As discussed previously, $P_B$ and $P_A$ represent the probabilities of a validator $i$ being selected as a block proposer ($B$) or an attester ($A$), respectively. Building on the theory of conditional expected utility \cite{luce1971conditional}, we construct the expected utility for agent $i$ in our model as follows:

\begin{equation} \label{expected_utilities_bayesianGame_final}
\begin{split}
EU_i(s) &= \mathbb{E}_{a, \theta} [u_i(s, \theta)] \\ 
&= P_B \cdot \left(\mathbb{E}_{a, \theta}[u_i(a, \theta) \mid \theta_i = B] \right) \\
&+ P_A \cdot \left(\mathbb{E}_{a, \theta}[u_i(a, \theta) \mid \theta_i = A] \right)
\end{split}
\end{equation}

Equation \ref{expected_utilities_bayesianGame_final} is utilized to calculate the expected utility of validator $i$ without knowing its types before the game commences. Leveraging Equation \ref{expected_utilities_bayesianGame_final}, we model the uncertainty of types and actions within Bayesian games. Prior to the game's commencement, agents are aware of the probabilities denoted by $P_B$ and $P_A$, reflecting the likelihood of being selected as a block proposer or an attester, respectively, based on common prior knowledge.

To more comprehensively capture the nuanced strategic interactions among validators of varying types, it is essential to introduce a more detailed construct of expected utilities. Therefore, we further refine the expected utility calculation to consider specific combinations of strategies and types for agents $i$ and $-i$ as shown below:

\begin{equation} \label{eq_eu_i_simplified_refined_adjusted_for_types_certain_match}
\begin{split}
EU_i(s_i, s_{-i}, \theta_i, \theta_{-i}) &= p(\theta_{i}, \theta_{-i}) \cdot (u_i(a_i, a_{-i}, \theta_i, \theta_{-i}) - C_i^{EP}(y)) \\
\end{split}
\end{equation}

In Equation \ref{eq_eu_i_simplified_refined_adjusted_for_types_certain_match}, $p(\theta_{i}, \theta_{-i})$ represents the probability of agents $i$ and $-i$ having specific types $\theta_i$ and $\theta_{-i}$, quantifying the likelihood that this type combination occurs. $u_i(a_i, a_{-i}, \theta_i, \theta_{-i})$ is the utility function for agent $i$, measuring the payoff that agent $i$ receives from choosing action $a_i$ while other agents choose $a_{-i}$, given their respective types. Additionally, $a_{-i}$ represents the combination of actions chosen by agents other than $i$. For instance, in an Eth2 slot, it is possible that some agents within the group of agents $-i$ may choose to cooperate ($C$), while others may decide to deviate ($D$), reflecting the diverse strategic decisions possible within our game context.

\begin{definition} \label{def_x_bc_and_x_bd}
Let $x^{BC} = 1$ denote the condition that "there exists a unique validator among agents $-i$, denoted by $j$, such that $\theta_j = B$ and this agent executes $a_j = C$." If no such validator exists, let $x^{BC} = 0$. Similarly, let $x^{BD} = 1$ denote the condition that "there exists a unique validator among agents $-i$, denoted by $k$, such that $\theta_k = B$ and this agent executes $a_k = D$." If no such validator exists, let $x^{BD} = 0$.
\end{definition}

\begin{definition} \label{def_gamma}
$\gamma$ denotes the proportion of agents $-i$ choosing action $D$. Specific thresholds related to $\gamma$ influence agents' strategic decisions within the system, particularly triggering Eth2's "inactivity leak" mechanism when certain conditions are met. This mechanism affects the overall system's resilience and agents' behaviors under non-participation or reduced activity scenarios.
\end{definition}

Moreover, in our analysis in Section \ref{sect_incentive_analysis}, we examine unique conditions across various cases. Utilizing the theory of conditional expected utility \cite{luce1971conditional}, which allows us to model the utility of agents under specific scenarios, we define the expected utility of agent $i$ within our game theoretical framework. This method helps illustrate how different scenarios, represented by $\Psi$, affect the decision-making processes of the agents. The expected utility, conditional on these scenarios, is detailed in Equation \ref{eq_eu_i_simplified_certain_condition} as follows:

\begin{equation} \label{eq_eu_i_simplified_certain_condition}
\begin{split}
&EU_i(s_i, s_{-i}, \theta_i, \theta_{-i} \mid \Psi) = \\ 
&p(\theta_{i}, \theta_{-i} \mid \Psi) \cdot (u_i(a_i, a_{-i}, \theta_i, \theta_{-i} \mid \Psi) - C_i^{EP}(y))
\end{split}
\end{equation}

where $(u_i(a_i, a_{-i}, \theta_i, \theta_{-i} \mid \Psi)$ is the conditional utility, as proposed by \cite{Gottinger_1973}. Also, $\Psi$ represents different conditions in various sub-cases in Section \ref{sect_incentive_analysis} as follows:
\begin{itemize}
    \item $\Psi$ represents scenarios where one of the agents $-i$ acts as a block proposer, and all other agents $-i$ act as attesters. This includes (1) \textbf{$x^{BD} = 1$}, indicating that there exists a unique validator among agents $-i$, as defined in Definition \ref{def_x_bc_and_x_bd}, who executes action $D$, and (2) \textbf{$x^{BC} = 1$}, indicating that there exists a unique validator among agents $-i$, as defined in Definition \ref{def_x_bc_and_x_bd}, who executes action $C$.
    \item $\Psi$ represents scenarios where the proportion $\gamma$ of agents $-i$ choosing action $D$ is below or above a certain threshold. Specifically, this includes conditions where $\gamma \geq d$ or $\gamma < d$, with $d$ defined as a variable threshold necessary to trigger the "inactivity leak" in Eth2.
\end{itemize}

\paragraph{\textbf{Best Response and Bayesian Nash Equilibrium}} After defining the expected utility of an agent $i$ in Equation \ref{eq_eu_i_simplified_refined_adjusted_for_types_certain_match}, we can further define the set of agent $i$'s best responses to the strategy profiles of $s_{-i}$ as:

\begin{equation} \label{eq_bestResponse}
    BR_i(s_{-i}) = \arg \max_{s_i \in S_i} EU_i(s_i, s_{-i})   
\end{equation}

Equation \ref{eq_bestResponse} implies that each agent $i$ will attain the maximum expected utility by selecting their best response. In a Bayesian game, agents formulate their strategies based on their types and prescribe these strategies to be followed before the game starts. After the game commences, each agent $i$ discovers their type and acts accordingly, based on their pre-established contingency plan (strategies). A BNE is achieved when each agent's strategy constitutes the best response to the strategies chosen by the other agents. Mathematically, a strategy profile $s^*$ is a BNE if and only if $s^*_i \in BR_i(s^*_{-i})$ for every agent $i$. In this paper, our goal is to demonstrate that choosing action $C$ constitutes a BNE for validators within a single slot, especially considering validators that fall into two distinct categories based on their types. Additionally, we show that $s_i^*(C)$ is the dominant strategy of agent $i$, irrespective of their type, which benefits us in further concluding that Eth2's incentive mechanism is incentive compatible. 

\section{Incentive Analysis of validators in Eth2 - Methodology and Findings} \label{sect_incentive_analysis}
This section presents the primary methodology and analysis results of this paper. Specifically, we formulated the utilities of agents and applied Bayesian game analysis to obtain three key findings. First, we consider the original reward scheme of Ethereum 2.0 and demonstrate that cooperation will emerge as a BNE strategy for validators of two distinct categories in our setting under this reward scheme, regardless of their type (See Theorem \ref{theorem_bne_cooperate_eth}). Furthermore, we show that under this reward scheme, cooperation becomes an \emph{ex ante dominant strategy} for validators of any type (See Theorem \ref{theorem_bne_cooperate_eth}). Lastly, based on the premise that cooperation is an ex ante dominant strategy, we infer that Ethereum 2.0's incentive mechanism ensures incentive compatibility for both single block proposers and attesters in Eth2 (See Corollary \ref{corollary_incentive_compatibility_eth}).

\subsection{Analysis of Incentive Compatibility for Validators in Eth2} \label{sect_incentive_compatibility}
Validators in Eth2 are subject to varying rewards or penalties and incur different costs depending on their actions \cite{vitalikannotatedspec, edgington_2022}. A strategic behavior observed among validators is taking their devices offline to reduce operational costs. This paper employs the Bayesian game model we developed to demonstrate that taking devices offline is not the best strategy for validators. Before delving into the analysis, let us revisit the Bayesian game framework defined in Section \ref{sect_game_model}. We consider a network of validators, where each validator belongs to one of two types, represented by $\theta_i \in \{B, A\}$, denoting the roles of block proposer and attester, respectively. Each validator, irrespective of their type, can undertake actions $a_i \in \{C, D\}$, corresponding to cooperating (keeping the device online) or deviating (taking the device offline). In this paper, we focus on validators' strategies of choosing between keeping their devices online or offline. Analyzing more complex strategies represents a promising direction for future research.

\subsubsection{Validators' Device Offline Behavior Equilibrium Analysis}
Using the Bayesian game model described in Section \ref{sect_game_model}, we investigate whether it is the best response for validators to take their devices offline under the current Eth2 reward and penalty scheme in any given slot. In our analysis, we consider the general scenario where validators operate their own verification nodes rather than through staking pools, such as those provided by large exchanges. In the context of using centralized staking pools provided by large exchanges, the cost incurred by validators refers to the fees charged by these exchanges for validation services rather than the direct costs associated with setting up and maintaining validator client software and hardware.

\begin{theorem}\label{theorem_bne_cooperate_eth}
Let $s^*$ denote a strategy profile in the Bayesian game model for an arbitrary number of validators $n$, where each validator $i$ can be either a block proposer ($\theta_i = B$) or an attester ($\theta_i = A$). Assuming that all validators utilize a standard Ethereum client without participating in centralized delegation pools, thereby facing the same cost per epoch of a validator, $C_i^{EP}(y)$, the strategy $s_i^*(C \mid \theta_i)$ that encourages validators to cooperate by keeping their devices online not only constitutes a BNE but also represents an ex ante dominant strategy under the Eth2's incentive mechanism.
\end{theorem}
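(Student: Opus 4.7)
The plan is to prove the ex ante dominance of $s_i^*(C)$ directly, since a dominant strategy is automatically a best response to any strategy profile of the other agents, including the equilibrium one, so the BNE claim follows as an immediate corollary. Concretely, I would form the expected utility difference $\Delta_i(s_{-i}) = EU_i(s_i^*(C), s_{-i}) - EU_i(s_i^*(D), s_{-i})$ and, using the law of total expectation over $\theta_i \in \{B, A\}$ as captured by Equation \ref{expected_utilities_bayesianGame_final}, decompose it as $P_B \cdot \Delta_i^B(s_{-i}) + P_A \cdot \Delta_i^A(s_{-i})$. The per-epoch cost $C_i^{EP}(y)$ enters both $EU_i(s_i^*(C), \cdot)$ and $EU_i(s_i^*(D), \cdot)$ symmetrically, because the equipment and participation components in Equation \ref{eq_cost_total} are amortized ongoing commitments that do not vanish in a single offline epoch, so both conditional differences reduce to comparisons of rewards and penalties.

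For $\theta_i = B$, cooperating yields the proposer reward $R_i^P = R_i^{PA} + R_i^{PC} + R_i^{PT}$ from Equation \ref{eq_reward_rp_sum}, while deviating forfeits the entire proposer reward since the block is never broadcast. Each summand of $R_i^P$ is nonnegative, and $R_i^{PA}$ is strictly positive whenever at least one attester in $-i$ cooperates, which under Definition \ref{def_gamma} is the generic case, giving $\Delta_i^B(s_{-i}) > 0$ independently of $s_{-i}$. For $\theta_i = A$, I would split the analysis at the inactivity-leak threshold $d$ of Definition \ref{def_gamma}. In the regime $\gamma < d$, cooperating earns $R_i^A = R_i^{AS} + R_i^{AT} + R_i^{AH}$ from Equation \ref{eq_reward_at_sum}, while deviating incurs the penalty $P_i^A = R_i^{AS} + R_i^{AT}$ from Equation \ref{eq_penalty_pa}; the gap $R_i^A + P_i^A$ is manifestly positive. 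In the regime $\gamma \geq d$, attestation rewards vanish but the inactivity-leak penalty $P_i^{IL}$ is levied only on offline validators, so the cooperator still strictly outperforms the deviator. Invoking Definition \ref{def_x_bc_and_x_bd} is needed here to confirm that the correctness of $i$'s attestation depends on the aggregate voting weight of the remaining attesters rather than on the identity of the single block-proposer among $-i$, so the comparison is uniform over whether $x^{BC}=1$ or $x^{BD}=1$.

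The main obstacle will be the inactivity-leak regime, because the usual attestation-reward argument fails when quorum is lost and so the dominance of cooperation must instead be argued through the asymmetric application of $P_i^{IL}$. A second subtlety is the asymmetric penalty scheme noted in Section \ref{sect_bk_punishment_model}, where voting incorrectly for the head carries no penalty while voting incorrectly for source or target does; I must verify that this asymmetry does not open a profitable hybrid deviation, which is ruled out because the action set is restricted to the binary $\{C, D\}$, and any such hybrid would fall outside the scope of this theorem. Once each conditional difference $\Delta_i^B(s_{-i})$ and $\Delta_i^A(s_{-i})$ is shown to be strictly positive uniformly in $s_{-i}$, linearity of expectation in Equation \ref{expected_utilities_bayesianGame_final} gives $\Delta_i(s_{-i}) > 0$ for every $s_{-i}$, establishing ex ante dominance of $s_i^*(C)$; specializing $s_{-i}$ to the cooperative profile then yields the BNE claim.
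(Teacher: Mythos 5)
Your proposal is correct and arrives at exactly the same elementary comparisons as the paper's proof — proposer: $R_i^P$ versus $0$ with no penalty for an absent proposer; attester below the leak threshold: $R_i^A$ versus $-P_i^A$; attester during an inactivity leak: $0$ versus $-P_i^{IL}$ — but it organizes them differently. The paper proves the theorem by an exhaustive enumeration over the opponents' strategy profiles (Case 1: all of $-i$ cooperate; Cases 2--3: mixed profiles split by $\gamma \gtrless \tfrac{1}{3}$, by whether the leak is triggered, and by $x^{BC}$ versus $x^{BD}$; Case 4: agent $i$ itself mixes), comparing absolute expected utilities in each cell and invoking the empirical estimates $R_i^P > C_i^{EP}(y)$ and $R_i^A > C_i^{EP}(y)$ from the literature. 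You instead compute the single difference $\Delta_i(s_{-i})$, observe that $C_i^{EP}(y)$ cancels, and argue uniformity in $s_{-i}$ except at the inactivity-leak boundary; this both shortens the argument and removes the (logically unnecessary) reliance on reward-exceeds-cost estimates, and your ordering — dominance first, BNE as an immediate corollary — is the cleaner logical direction, whereas the paper asserts both conclusions together at the end. Two small points you should make explicit to match the paper's coverage: first, the paper's Case 4 (agent $i$ playing a mixed strategy) is handled in your framework by linearity — any mixture's utility is a convex combination of the utilities of $C$ and $D$, so strict pointwise dominance of $C$ over $D$ settles it — but you should say so rather than leave it implicit; second, your claim that $\Delta_i^B(s_{-i}) > 0$ is only guaranteed when $R_i^P > 0$, which fails in the degenerate profile where every attester in $-i$ deviates and the block carries no tips, so strict dominance degrades to weak dominance there (the paper silently has the same gap, since its inequality also reduces to $R_i^P > 0$).
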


\begin{proof}
In the Bayesian game model we have defined, validators are categorized into two types: $\theta_i \in \{B, A\}$. Each type has two possible actions, denoted by $a_i \in \{C, D\}$. We subsequently detail our proof through four distinct cases, examining how agent $i$, representing one of the validators, adapts strategies in response to the actions of all other validators, denoted by agents $-i$. 

Across Cases 1 to 3, we rigorously analyze the variations in the expected utility of agent $i$ as it responds to the strategies employed by agents $-i$. Case 4 further explores the impact of agent $i$ employing mixed strategies on its expected utility, which also depends on the strategies of agents $-i$, thereby ensuring the rigor and thoroughness of our analysis. Furthermore, it is critical to note that in Cases 2 to 4, agents $-i$ may adopt either pure or mixed strategies. Specifically, in Case 1, agents $-i$ employ pure strategies, consistently choosing either $C$ or $D$, while in subsequent cases, some agents $-i$ may vary their choices between $C$ and $D$ from one slot to another. \\

\textbf{Case 1:} \, $\forall \theta_{-i} \in \{B, A\}, \, a_{-i} = C$

\textbf{Hypothesis:} Agent $i$ plays a pure strategy, choosing either $C$ or $D$. Additionally, all agents $-i$ play $C$ as their pure strategies. Consequently, in all sub-cases of Case 1, we uniformly set $a_{-i}$ in Equation \ref{eq_eu_i_simplified_refined_adjusted_for_types_certain_match} to $C$. Specifically, in Sub-case 1.1, this results in Equations \ref{expected_utilities_subcase_1_1_01} and \ref{expected_utilities_subcase_1_1_02}; in Sub-case 1.2, it leads to Equations \ref{expected_utilities_subcase_1_2_01} to \ref{expected_utilities_subcase_1_2_04}.

\textbf{Analysis:} In this case, since our goal is to analyze the impact on agent $i$ when choosing either $a_{i} = C$ or $a_{i} = D$ as their pure strategy under different circumstances, and given that all other agents $-i$ consistently choose $a_{-i} = C$, we examine two sub-cases: agent $i$ will be either a block proposer ($\theta_i = B$) or an attester ($\theta_i = A$).

\paragraph{\emph{\textbf{Sub-case 1.1:}}} \label{section_subcase_1-1} $\theta_i = B$, which means agent $i$ is the block proposer, with all agents $ -i $ playing $C$. 

In this sub-case, we can disregard the scenario where $\theta_{-i} = B$, as the Eth2 protocol stipulates that only one validator is selected as the block proposer per slot. Also, with this rule in mind, we will utilize the following two equations to conduct our analysis. First, Equation \ref{expected_utilities_subcase_1_1_01} represents the situation when $\theta_{-i} = A$ and agent $i$ plays $C$.

\begin{equation} \label{expected_utilities_subcase_1_1_01}
\begin{split}
EU_i(C, C, B, A) & = p(B, A) \cdot (u_i(C, C, B, A) - C_i^{EP}(y)) \\
&=  p(B, A) \cdot (R_i^P - C_i^{EP}(y))
\end{split}
\end{equation}

where $R_i^P$ and $C_i^{EP}(y)$ denote the reward of the block proposer and the execution cost of the block proposer, respectively, as introduced in Sections \ref{sect_bk_reward_model} and \ref{sect_bk_cost_model}, separately. 

Second, Equation \ref{expected_utilities_subcase_1_1_02} represents when $\theta_{-i} = A$ and the agent $i$ plays $D$, with all agents $ -i $ playing $C$.

\begin{equation} \label{expected_utilities_subcase_1_1_02}
\begin{split}
EU_i(D, C, B, A) & = p(B, A) \cdot (u_i(D, C, B, A) - C_i^{EP}(y)) \\
&= p(B, A) \cdot (-~C_i^{EP}(y))
\end{split}
\end{equation}

where $u_i(D, C, B, A) = 0$ because Eth2 does not penalize the block proposer if they fail to execute their task.

According to examples in \cite{edgington_2022, jim2020}, we know $R_i^P > C_i^{EP}(y)$. Consequently, based on Equations \ref{expected_utilities_subcase_1_1_01} and \ref{expected_utilities_subcase_1_1_02}, we can observe that \linebreak $EU_i(C, C, B, A) > EU_i(D, C, B, A)$, which indicates choosing $C$ is the best response of agent $i$ in this sub-case. \\

\textbf{Sub-case 1.2:} \label{sect_sub_case_1_2} $\theta_i = A$, which means agent $i$ is an attester, with all agents $-i$ playing $C$.

\textbf{Scenario 1.2.1:} The condition $x^{BC} = 1$ holds, indicating that a unique validator among agents $-i$ executes $s_{-i}(C)$ with $\theta_{-i} = B$. In this scenario, when $x^{BC} = 1$ and agent $i$ plays $C$, we can obtain:

\begin{equation} \label{expected_utilities_subcase_1_2_01}
\begin{split}
& EU_i(C, C, A, A \mid x^{BC} = 1) = \\ 
& p(A, A \mid x^{BC} = 1) \cdot (u_i(C, C, A, A \mid x^{BC} = 1) - C_i^{EP}(y)) =\\
& p(A, A \mid x^{BC} = 1) \cdot (R_i^A - C_i^{EP}(y))
\end{split}
\end{equation}

where $p(A, A \mid x^{BC} = 1)$ represents the conditional probability of that when $\theta_i = A$ and except for one of the agents $-i$'s type $= B$, all other agents $-i$'s type = A. 

When $x^{BC} = 1$ and agent $i$ plays $D$. We can obtain:

\begin{equation} \label{expected_utilities_subcase_1_2_03}
\begin{split}
&EU_i(D, C, A, A \mid x^{BC} = 1) = \\
&p(A, A \mid x^{BC} = 1) \cdot (u_i(D, C, A, A \mid x^{BC} = 1) - C_i^{EP}(y)) =\\
&p(A, A \mid x^{BC} = 1) \cdot (-P_i^A - C_i^{EP}(y))
\end{split}
\end{equation}

\textbf{Scenario 1.2.2:} The condition $x^{BC} = 0$ holds, indicating all agents -i are attesters.   

In this scenario, when $\theta_{-i} = A$ and agent $i$ plays $C$. We can obtain:

\begin{equation} \label{expected_utilities_subcase_1_2_02}
\begin{split}
EU_i(C, C, A, A) & = p(A, A) \cdot (u_i(C, C, A, A) - C_i^{EP}(y)) \\
&=  p(A, A) \cdot (R_i^A - C_i^{EP}(y))
\end{split}
\end{equation}

when $\theta_{-i} = A$ and agent $i$ plays $D$. We can obtain:

\begin{equation} \label{expected_utilities_subcase_1_2_04}
\begin{split}
EU_i(D, C, A, A) & = p(A, A) \cdot (u_i(D, C, A, A) - C_i^{EP}(y)) \\
&=  p(A, A) \cdot (-P_i^A - C_i^{EP}(y))
\end{split}
\end{equation}

Note that in Eth2, at least one validator is required to be selected as the blockchain proposer in a slot. Consequently, $p(A, A)$ in Equations \ref{expected_utilities_subcase_1_2_02} and \ref{expected_utilities_subcase_1_2_04} is equal to 0, resulting in the values of Equations \ref{expected_utilities_subcase_1_2_02} and \ref{expected_utilities_subcase_1_2_04} being equal to 0. Moreover, it is similarly shown in \cite{edgington_2022, jim2020} that $R_i^A > C_i^{EP}(y)$. Therefore, based on Equations \ref{expected_utilities_subcase_1_2_01} to \ref{expected_utilities_subcase_1_2_04}, we can readily conclude that $s_i^*(C)$ is the best response of agent $i$ with $\theta_i = A$, $\forall \theta_{-i} = B$ or $A$. \\

In summary, the analytical results from sub-cases 1.1 and 1.2 consistently demonstrate that if all agents $-i$ choose $C$, regardless of whether they are type $B$ or $A$, then selecting $C$ is the best response for agent $i$ in Case 1, irrespective of the type of agents $-i$. Thus far, in Case 1, we discuss the impact of agent $i$'s expected utility when agents $-i$ only chooses $C$. Following this, in Cases 2 to 4, we aim to consider the influence on agent $i$'s expected utility under the circumstance when agents $-i$ opts for $D$. \\

\textbf{Case 2:} \label{para_case_02_new} $\theta_i = B$, which means that agent $i$ is the block proposer.

\textbf{Hypothesis:} Any agent $i$ still plays a pure strategy of either $a_i = C$ or $a_i = D$. However, all agents $-i$ play mixed strategies\footnote{Note that agents $-i$ may still play a pure strategy, since a pure strategy is a special case of a mixed strategy.}.

\textbf{Analysis:} This case analyzes the impact of the "inactivity leak" from some agents $-i$ playing $s_{-i}(D)$ on agent $i$ with $\theta_i = B$. We will explore this through two sub-cases. \\

\textbf{Sub-case 2.1}: All $\theta_{-i} = A$, and less than $\frac{1}{3}$ of agents $-i$ choose to play $s_{-i}(D)$.

Given that the reward of a block proposer in Eth2 varies depending on the number of attesters, and considering that fewer than $\frac{1}{3}$ of agents $-i$ choose $D$, agent $i$'s reward, while potentially reduced, remains guaranteed. This assumption is based on the Eth2 reward mechanism, where a block proposer is always rewarded, albeit the total reward may decrease with fewer attesters. Assuming $\gamma$, which denotes the proportion of agents $-i$ choosing $s_{-i}(D)$, is less than $\frac{1}{3}$ and agent $i$ opts for $C$, we can employ the following equation to calculate the utility of agent $i$:

\begin{equation} \label{expected_utilities_subcase_2_1_01}
\begin{split}
& EU_i(C, s_{-i}, B, A \mid \gamma < \frac{1}{3}) \\
& = p(B, A \mid \gamma < \frac{1}{3}) \cdot (u_i(C, a_{-i}, B, A \mid \gamma < \frac{1}{3}) - C_i^{EP}(y)) \\
& = p(B, A \mid \gamma < \frac{1}{3}) \cdot (R_i^P - C_i^{EP}(y))
\end{split}
\end{equation}

In this equation, $a_{-i}$ represents the actions taken by all agents other than $i$ across all scenarios in Cases 2 to 4. The actions within $a_{-i}$ can vary, reflecting different combinations of $C$ and $D$ as chosen by the attesters. This model allows us to account for the diversity of strategies in the system and better reflects the complexity of the interaction among validators under the specified conditions. Note that in the following equations of different sub-cases, the definition of $\gamma$ remains the same. Then, when agent $i$ plays $D$, we can obtain:

\begin{equation} \label{expected_utilities_subcase_2_1_02}
\begin{split}
& EU_i(D, s_{-i}, B, A \mid \gamma < \frac{1}{3}) \\
& = p(B, A \mid \gamma < \frac{1}{3}) \cdot (u_i(D, a_{-i}, B, A \mid \gamma < \frac{1}{3}) - C_i^{EP}(y)) \\
&=  p(B, A \mid \gamma < \frac{1}{3}) \cdot (-~C_i^{EP}(y))
\end{split}
\end{equation}

In Equation \ref{expected_utilities_subcase_2_1_01}, agent $i$ receives $R_i^P$ upon successfully completing its task as a block proposer. Notably, Equation \ref{expected_utilities_subcase_2_1_02} highlights a unique aspect of Eth2's incentive mechanism: block proposer, even if they opt out of their assigned task (representing by choosing D), are not subjected to direct penalties. Therefore, based on equations \ref{expected_utilities_subcase_2_1_01} and \ref{expected_utilities_subcase_2_1_02}, we can observe, in sub-case 2.1, $s_i^*(C)$ is still the best response of agent $i$. \\

\textbf{Sub-case 2.2}: All $\theta_{-i} = A$, with at least $\frac{1}{3}$ of agents $-i$ choosing to play $s_{-i}(D)$. Since the “inactivity leak” in Eth2 is triggered when no new checkpoints are verified for four consecutive epochs, we need to consider two scenarios in this sub-case. \\

\textbf{Scenario 2.2.1:} The "inactivity leak" is triggered. In this scenario, agent $i$ still can receive $R_i^p$ since Eth2 only stops rewarding attesters in the "inactivity leak" period. Assume that $\gamma \geq \frac{1}{3}$ and agent $i$ plays $C$, we can utilize the following equation to compute the utility of agent $i$:

\begin{equation} \label{expected_utilities_subcase_2_2_01}
\begin{split}
& EU_i(C, s_{-i}, B, A \mid \gamma \geq \frac{1}{3}) \\ 
& = p(B, A \mid \gamma \geq \frac{1}{3}) \cdot (u_i(C, a_{-i}, B, A \mid \gamma \geq \frac{1}{3}) - C_i^{EP}(y)) \\
&=  p(B, A \mid \gamma \geq \frac{1}{3}) \cdot (R_i^p - C_i^{EP}(y))
\end{split}
\end{equation}

Assume that $\gamma \geq \frac{1}{3}$ and agent $i$ play $D$, we can utilize the following equation to compute the utility of agent $i$:

\begin{equation} \label{expected_utilities_subcase_2_2_02}
\begin{split}
& EU_i(D, s_{-i}, B, A \mid \gamma \geq \frac{1}{3}) \\ 
& = p(B, A \mid \gamma \geq \frac{1}{3}) \cdot (u_i(D, a_{-i}, B, A \mid \gamma \geq \frac{1}{3}) - C_i^{EP}(y)) \\
&=  p(B, A \mid \gamma \geq \frac{1}{3}) \cdot (-~C_i^{EP}(y))
\end{split}
\end{equation}

where $(u_i(D, a_{-i}, B, A \mid \gamma \geq \frac{1}{3})$ = 0 in Equation \ref{expected_utilities_subcase_2_2_02} since agent $i$ does not execute its task in this scenario. \\

\textbf{Scenario 2.2.2:} The "inactivity leak" is not triggered. In this scenario, we can employ the following equations to calculate the utilities of agent $i$: 

\begin{equation} \label{expected_utilities_subcase_2_2_03}
\begin{split}
& EU_i(C, s_{-i}, B, A \mid \gamma \geq \frac{1}{3}) \\
& = p(B, A \mid \gamma \geq \frac{1}{3}) \cdot (u_i(C, a_{-i}, B, A \mid \gamma \geq \frac{1}{3}) - C_i^{EP}(y)) \\
&=  p(B, A \mid \gamma \geq \frac{1}{3}) \cdot (R_i^p - C_i^{EP}(y))
\end{split}
\end{equation}

\begin{equation} \label{expected_utilities_subcase_2_2_04}
\begin{split}
& EU_i(D, s_{-i}, B, A \mid \gamma \geq \frac{1}{3}) \\
& = p(B, A \mid \gamma \geq \frac{1}{3}) \cdot (u_i(D, a_{-i}, B, A  \mid \gamma \geq \frac{1}{3}) - C_i^{EP}(y)) \\
&=  p(B, A \mid \gamma \geq \frac{1}{3}) \cdot (-~C_i^{EP}(y))
\end{split}
\end{equation}

In Equation \ref{expected_utilities_subcase_2_2_03}, although more than $\frac{1}{3}$ of agents $-i$ with $\theta_{-i} = A$ play $s_{-i}(D)$, which will reduce the agent $i$'s reward ($R_i^p$), references \cite{edgington_2022,jim2020} indicate that in Equation \ref{expected_utilities_subcase_2_2_03}, $R_i^p$ is indeed greater than $C_i^{EP}(y)$. Furthermore, based on the Eth2 protocol, in Equation \ref{expected_utilities_subcase_2_2_04}, it is established that $u_i(D, a_{-i}, B, A  \mid \gamma \geq \frac{1}{3}) = 0$ since the agent $i$ does not execute its task of a block proposer, reflecting that agent $i$ incurs no penalties for not fulfilling its block proposer responsibilities when $\gamma \geq \frac{1}{3}$. 

Upon closer examination, we find that \ref{expected_utilities_subcase_2_2_01} and \ref{expected_utilities_subcase_2_2_03} are identical, just as  \ref{expected_utilities_subcase_2_2_02} and \ref{expected_utilities_subcase_2_2_04} share the same form. Therefore, based on these equations, we conclude that $s_i^*(C)$ is the best response to agent $i$ in sub-case 2.2. \\

\textbf{Case 3:} \label{para_case_03_new} $\theta_i = A$, which signifies that agent $i$ is an attester.

\textbf{Hypothesis:} Any agent $i$ still plays a pure strategy of either $a_i = C$ or $a_i = D$. However, all agents $-i$ play mixed strategies.

\textbf{Analysis:} This case examines the impact of the "inactivity leak" from some agents $-i$ playing $s_{-i}(D)$ on agent $i$ with $\theta_i = A$. Here, one of the agents $-i$ is the block proposer, and the others are attesters. We will focus on the strategies of the block proposer among agents $-i$. \\

\textbf{Sub-case 3.1}: The condition $x^{BC} = 1$ holds, indicating that a unique validator among agents $-i$ executes $s_{-i}(C)$ with $\theta_{-i} = B$. In this sub-case, we continue to conduct our analysis based on other agents $-i$'s strategies in two main scenarios. \\

\textbf{Scenario 3.1.1:} More than or equal to $\frac{1}{3}$ agents $-i$ choose $s_{-i}(D)$, and others play $s_{-i}(C)$ with $\theta_{-i} = A$. Similar to our analysis in sub-cases 2.1 and 2.2, we need to consider the impact of the "inactivity leak" to calculate agent $i$'s utility. 

If the "inactivity leak" is triggered, we can employ the following equations to compute agent $i$'s utility by assuming that $\gamma \geq \frac{1}{3}$ and agent $i$ play either $C$ or $D$, we can utilize the following equation to compute the utility of agent $i$:

\begin{equation} \label{expected_utilities_subcase_3_1_1_01}
\begin{split}
&EU_i(C, s_{-i}, A, A \mid \gamma \geq \frac{1}{3},~x^{BC}=1) \\
&= p(A, A \mid \gamma \geq \frac{1}{3},~x^{BC}=1)~\cdot \\
& \quad (u_i(C, a_{-i}, A, A \mid \gamma \geq \frac{1}{3},~x^{BC}=1) - C_i^{EP}(y)) \\
&=  p(A, A \mid \gamma \geq \frac{1}{3}, x^{BC}=1) \cdot (-~C_i^{EP}(y))
\end{split}
\end{equation}

\begin{equation} \label{expected_utilities_subcase_3_1_1_02}
\begin{split}
& EU_i(D, s_{-i}, A, A \mid \gamma \geq \frac{1}{3}, x^{BC}=1) \\ 
&= p(A, A \mid \gamma \geq \frac{1}{3}, x^{BC}=1) \cdot \\
& \quad (u_i(D, a_{-i}, A, A \mid \gamma \geq \frac{1}{3}, x^{BC}=1) - C_i^{EP}(y)) \\
&=  p(A, A \mid \gamma \geq \frac{1}{3}, x^{BC}=1) \cdot (-P_i^{IL}-~C_i^{EP}(y))
\end{split}
\end{equation}

In equation \ref{expected_utilities_subcase_3_1_1_01}, agent $i$ cannot receive rewards, as Eth2 does not allocate rewards to attesters during the "inactivity leak" period. Furthermore, according to Equation \ref{expected_utilities_subcase_3_1_1_02}, agent $i$ incurs inactivity penalties ($P_i^{IL}$) due to becoming offline. 

In contrast, if the "inactivity leak" is not triggered, we can employ the following equations to calculate agent $i$'s utility:

\begin{equation} \label{expected_utilities_subcase_3_1_1_03}
\begin{split}
& EU_i(C, s_{-i}, A, A \mid \gamma \geq \frac{1}{3}, x^{BC}=1) \\
&= p(A, A \mid \gamma \geq \frac{1}{3}, x^{BC}=1) ~\cdot \\
& \quad (u_i(C, a_{-i}, A, A \mid \gamma \geq \frac{1}{3}, x^{BC}=1) - C_i^{EP}(y)) \\
&=  p(A, A \mid \gamma \geq \frac{1}{3}, x^{BC}=1) \cdot (R_i^A -~C_i^{EP}(y))
\end{split}
\end{equation}

\begin{equation} \label{expected_utilities_subcase_3_1_1_04}
\begin{split}
& EU_i(D, s_{-i}, A, A \mid \gamma \geq \frac{1}{3}, x^{BC}=1) \\
&= p(A, A \mid \gamma \geq \frac{1}{3}, x^{BC}=1)~\cdot \\
& \quad (u_i(D, a_{-i}, A, A \mid \gamma \geq \frac{1}{3}, x^{BC}=1) - C_i^{EP}(y)) \\
&=  p(A, A \mid \gamma \geq \frac{1}{3}, x^{BC}=1) \cdot (-P_i^{A} - C_i^{EP}(y))
\end{split}
\end{equation}

In Equation \ref{expected_utilities_subcase_3_1_1_03}, since the "inactivity leak" is not triggered, agent $i$ still remains eligible to receive rewards from Eth2 for completing their tasks as an attester. However, as outlined in Equation \ref{expected_utilities_subcase_3_1_1_04}, agent $i$ will incur penalties for turning off its device and failing to execute its task.\\

\textbf{Scenario 3.1.2:} Less than $\frac{1}{3}$ agents $-i$ choose $s_{-i}(D)$ and others play $s_{-i}(C)$ with $\theta_{-i} = A$. In this scenario, when agent $i$ plays $C$, the utility is computed by Equation \ref{expected_utilities_subcase_3_1_1_05}. Also, we can employ Equation \ref{expected_utilities_subcase_3_1_1_06} to calculate agent $i$'s utility when it plays $D$ in this sub-case. \\

\begin{equation} \label{expected_utilities_subcase_3_1_1_05}
\begin{split}
& EU_i(C, s_{-i}, A, A \mid \gamma < \frac{1}{3}, x^{BC}=1) \\
&= p(A, A \mid \gamma < \frac{1}{3}, x^{BC}=1) ~\cdot \\
& \quad (u_i(C, a_{-i}, A, A \mid \gamma < \frac{1}{3}, x^{BC}=1) - C_i^{EP}(y)) \\
&=  p(A, A \mid \gamma < \frac{1}{3}, x^{BC}=1) \cdot (R_i^A -~C_i^{EP}(y))
\end{split}
\end{equation}

\begin{equation} \label{expected_utilities_subcase_3_1_1_06}
\begin{split}
& EU_i(D, s_{-i}, A, A \mid \gamma < \frac{1}{3}, x^{BC}=1) \\
&= p(A, A \mid \gamma < \frac{1}{3}, x^{BC}=1)~\cdot \\
& \quad (u_i(D, a_{-i}, A, A \mid \gamma < \frac{1}{3}, x^{BC}=1) - C_i^{EP}(y)) \\
&=  p(A, A \mid \gamma < \frac{1}{3}, x^{BC}=1) \cdot (-P_i^{A} - C_i^{EP}(y))
\end{split}
\end{equation}

According to Equations \ref{expected_utilities_subcase_3_1_1_05} and \ref{expected_utilities_subcase_3_1_1_06}, agent $i$ acquires rewards for remaining online and incurs penalties for turning its device offline.

Thus far, based on the analysis from Equations \ref{expected_utilities_subcase_3_1_1_01} to \ref{expected_utilities_subcase_3_1_1_06}, we can obviously conclude that playing $s_i^*(C)$ is the best response to agent $i$ in sub-case 3.1. \\

\textbf{Sub-case 3.2}: The condition $x^{BD} = 1$ holds, indicating that a unique validator among agents $-i$ executes $s_{-i}(D)$ with $\theta_{-i} = B$. In this sub-case, we mainly discuss the varies of agent $i$'s utility if the block proposer is played by one of the agents $-i$. Similarly, we must consider whether the "inactivity leak" is triggered in the following two scenarios: \\

\textbf{Scenario 3.2.1:} More than or equal to $\frac{1}{3}$ agents $-i$ choose $s_{-i}(D)$, and others play $s_{-i}(C)$ with $\theta_{-i} = A$.

In this scenario, if the "inactivity leak" is triggered, we can employ the following equations to calculate agent $i$'s utility:

\begin{equation} \label{expected_utilities_subcase_3_2_1_01}
\begin{split}
& EU_i(C, s_{-i}, A, A \mid \gamma \geq \frac{1}{3}, x^{BD}=1) \\
&= p(A, A \mid \gamma \geq \frac{1}{3}, x^{BD}=1)~\cdot \\
& \quad (u_i(C, a_{-i}, A, A \mid \gamma \geq \frac{1}{3}, - C_i^{EP}(y)) \\
&= p(A, A \mid \gamma \geq \frac{1}{3}, x^{BD}=1) \cdot (- C_i^{EP}(y))
\end{split}
\end{equation}

\begin{equation} \label{expected_utilities_subcase_3_2_1_02}
\begin{split}
& EU_i(D, s_{-i}, A, A \mid \gamma \geq \frac{1}{3}, x^{BD}=1) \\
&= p(A, A \mid \gamma \geq \frac{1}{3}, x^{BD}=1)~\cdot \\ & \quad (u_i(D, a_{-i}, A, A \mid \gamma \geq \frac{1}{3}, x^{BD}=1) - C_i^{EP}(y)) \\
&=  p(A, A \mid \gamma \geq \frac{1}{3}, x^{BD}=1) \cdot (-P_i^{IL}-~C_i^{EP}(y))
\end{split}
\end{equation}

In Equation \ref{expected_utilities_subcase_3_2_1_01}, although agent $i$ completes its task as an attester, it still cannot receive a reward since the "inactivity leak" is triggered. Additionally, agent $i$ incurs inactivity penalties as outlined in Equation \ref{expected_utilities_subcase_3_2_1_02}. On the contrary, if the "inactivity leak" is not triggered, we can utilize the following equations to compute agent $i$'s utility:

\begin{equation} \label{expected_utilities_subcase_3_2_1_03}
\begin{split}
& EU_i(C, s_{-i}, A, A \mid \gamma \geq \frac{1}{3}, x^{BD}=1) \\
&= p(A, A \mid \gamma \geq \frac{1}{3}, x^{BD}=1)~\cdot \\ & \quad (u_i(C, a_{-i}, A, A \mid \gamma \geq \frac{1}{3}, x^{BD}=1) - C_i^{EP}(y)) \\
&=  p(A, A \mid \gamma \geq \frac{1}{3}, x^{BD}=1) \cdot (R_i^A - C_i^{EP}(y))
\end{split}
\end{equation}

\begin{equation} \label{expected_utilities_subcase_3_2_1_04}
\begin{split}
& EU_i(D, s_{-i}, A, A \mid \gamma \geq \frac{1}{3}, x^{BD}=1) \\
&= p(A, A \mid \gamma \geq \frac{1}{3}, x^{BD}=1)~\cdot \\
& \quad (u_i(D, a_{-i}, A, A \mid \gamma \geq \frac{1}{3}, x^{BD}=1) - C_i^{EP}(y)) \\
&=  p(A, A \mid \gamma \geq \frac{1}{3}, x^{BD}=1) \cdot (-P_i^{A} - C_i^{EP}(y))
\end{split}
\end{equation}

Although the block proposer does not propose a new block, agent $i$, as an attester, still receives a reward or penalty depending on whether it completes its task. Considering Equations \ref{expected_utilities_subcase_3_2_1_01}, \ref{expected_utilities_subcase_3_2_1_02}, \ref{expected_utilities_subcase_3_2_1_03}, and \ref{expected_utilities_subcase_3_2_1_04}, we can conclude that playing $s_i^*(C)$ remains the best response for agent $i$ in scenario 3.2.1. \\

\textbf{Scenario 3.2.2:} Less than $\frac{1}{3}$ agents $-i$ choose $s_{-i}(D)$ and others play $s_{-i}(C)$ with $\theta_{-i} = A$. In this scenario, since the "inactivity leak" will not be triggered, we can employ equations \ref{expected_utilities_subcase_3_2_1_05} and \ref{expected_utilities_subcase_3_2_1_06} to calculate agent $i$'s utilities when it plays $C$ and $D$, respectively. \\

\begin{equation} \label{expected_utilities_subcase_3_2_1_05}
\begin{split}
& EU_i(C, s_{-i}, A, A \mid \gamma < \frac{1}{3}, x^{BD}=1) \\
&= p(A, A \mid \gamma < \frac{1}{3}, x^{BD}=1)~\cdot \\ & \quad (u_i(C, a_{-i}, A, A \mid \gamma < \frac{1}{3}, x^{BD}=1) \cdot (R_i^A - C_i^{EP}(y)) \\
&=  p(A, A \mid \gamma < \frac{1}{3}, x^{BD}=1) \cdot (R_i^A - C_i^{EP}(y))
\end{split}
\end{equation}

\begin{equation} \label{expected_utilities_subcase_3_2_1_06}
\begin{split}
& EU_i(D, s_{-i}, A, A \mid \gamma < \frac{1}{3}, x^{BD}=1) \\
&= p(A, A \mid \gamma < \frac{1}{3}, x^{BD}=1)~\cdot \\
& \quad (u_i(D, a_{-i}, A, A \mid \gamma < \frac{1}{3}, x^{BD}=1) - C_i^{EP}(y)) \\
&=  p(A, A \mid \gamma < \frac{1}{3}, x^{BD}=1) \cdot (-P_i^{A} - C_i^{EP}(y))
\end{split}
\end{equation}

This scenario is a normal case where agent $i$ will receive rewards and incur penalties whether they complete or fail to complete the task of being an attester, as shown in Equations \ref{expected_utilities_subcase_3_2_1_05} and \ref{expected_utilities_subcase_3_2_1_06}, respectively. \\ 

Consequently, based on Equations \ref{expected_utilities_subcase_3_2_1_01} to  \ref{expected_utilities_subcase_3_2_1_06}, playing $s_i^*(C)$ is still the best response to agent $i$ in sub-case 3.2. \\

\textbf{Case 4:} Agent $i$ plays a mixed strategy.

\textbf{Hypothesis:} Agent $i$ and all agents $-i$ play mixed strategies.

\textbf{Analysis:} This case is used to analyze the scenario where, instead of executing a pure strategy as demonstrated in Cases 1 to 3, agent $i$ might execute mixed strategies depending on its type. For example, regardless of whether $\theta_i$ is $B$ or $A$, it will execute $C$ or $D$ based on different probabilities. In this situation, we can model agent $i$'s expected utility as:

\begin{equation} \label{eq_eu5_mixed}
    EU_i(s_{i,\theta}, s_{-i,\theta}) = \sum_{\theta \in \Theta} p(\theta) \cdot \left[ p_C \cdot u_i(C, a_{-i}, \theta) + p_D \cdot u_i(D, a_{-i}, \theta) \right]
\end{equation}

where $p_C$ and $p_D$ are the probabilities that agent $i$ plays $C$ or $D$, respectively. 

According to our proof from Cases 1 to 3, we have already proven that when agent $i$ executes $D$ in various sub-cases, it might incur penalties from Eth2 or at least result in not receiving rewards from Eth2. As a result, the value yielded by Equation \ref{eq_eu5_mixed} when agent $i$ plays $D$ is always less than when agent $i$ plays $C$, proving that playing $C$ is the dominant strategy for agent $i$, regardless of the strategies all agents $-i$ take.

In summary, across Cases 1 to 4, it is evident that the best response for any agent $i$, denoted as $BR_i$, is to adopt strategy $s_i^*(C \mid \theta_i)$, irrespective of the strategies of agents $-i$. Given the symmetry of the utility function across agents, this implies that $BR_{-i}$ similarly defaults to $s_{-i}(C \mid \theta_{-i})$. Thus, the Bayesian Nash Equilibrium (BNE) for this game, under Eth2's original reward scheme, is $s^*(C)$, where all agents select strategy $C$ regardless of their types, with no agent able to unilaterally deviate from improving their expected utility. Moreover, the analysis from all cases confirms that for every $\theta_i \in \Theta_i$, $u_i(C, s_{-i}, \theta_i)$ consistently exceeds $u_i(D, s_{-i}, \theta_i)$, where $s_{-i}$ represents the collective strategy profile of all other agents. Consequently, $C$ emerges as the ex ante dominant strategy for any agent $i$, independent of the strategies of agents $-i$.
\end{proof}

\subsection{Implications of Incentive Analysis for Eth2’s Multiple Block Proposers and Attesters
} \label{sect_incentive_analysis_discussion} 
Based on theorem \ref{theorem_bne_cooperate_eth}, we know strategy profile $s_i^*(C)$ is an \emph{ex ante dominant strategy}. Then we could have the next corollary: 

\begin{corollary}
\label{corollary_incentive_compatibility_eth}
In the Eth2 Bayesian game described in Theorem \ref{theorem_bne_cooperate_eth}, the strategy profile $s_i^*(C)$ constitutes an ex ante dominant strategy equilibrium. Consequently, the Eth2 protocol ensures incentive compatibility for interactions between any agent $i$ and the collective of other agents $-i$, in that playing $C$ is the best response for each agent when they adhere to the protocol, irrespective of their own type or the strategies of other agents.
\end{corollary}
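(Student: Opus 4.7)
The plan is to derive Corollary \ref{corollary_incentive_compatibility_eth} as an almost immediate consequence of Theorem \ref{theorem_bne_cooperate_eth}, by connecting the already-established ex ante dominance of $s_i^*(C)$ to the standard definition of incentive compatibility in mechanism design. First I would recall the definition of incentive compatibility: a mechanism is incentive-compatible when following the protocol-prescribed behavior maximizes each agent's expected utility, regardless of the strategies chosen by the other agents. In the Eth2 setting, the protocol-prescribed behavior for a validator, whether it is drawn as a block proposer ($\theta_i = B$) or as an attester ($\theta_i = A$), is precisely to keep its device online and carry out the assigned duty, i.e.\ to play $C$.

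Next I would translate the conclusion of Theorem \ref{theorem_bne_cooperate_eth} into the language of incentive compatibility. Theorem \ref{theorem_bne_cooperate_eth} has already shown, by going through Cases 1--4, that for every type $\theta_i \in \{B, A\}$ and every (possibly mixed) strategy profile $s_{-i}$ of the other agents, the expected utility of agent $i$ when playing $C$ strictly exceeds that of playing $D$. Writing this fact in the form
\begin{equation*}
EU_i(C, s_{-i}, \theta_i) \;\geq\; EU_i(a_i', s_{-i}, \theta_i) \qquad \forall\, a_i' \in A_i, \; \forall\, s_{-i} \in S_{-i}, \; \forall\, \theta_i \in \Theta_i,
\end{equation*}
exhibits $s_i^*(C \mid \theta_i)$ as a dominant strategy for each agent $i$. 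The tuple of such strategies therefore forms a dominant-strategy equilibrium, which is the first assertion of the corollary.

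Having identified the dominant-strategy equilibrium, I would then observe that dominant-strategy equilibria give the strongest possible form of incentive compatibility (dominant-strategy incentive compatibility, DSIC), which in particular implies the Bayesian incentive compatibility that the corollary asserts. Concretely, because $C$ weakly dominates $D$ for each agent under every realization of types and every conjecture about $s_{-i}$, no agent can unilaterally improve its expected utility by deviating from the protocol, either ex ante (before types are drawn) or interim (after observing its own type). This is exactly what is meant by stating that the Eth2 incentive mechanism is incentive-compatible for the interaction between any agent $i$ and the collective of agents $-i$ within a single slot.

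The main obstacle is essentially conceptual rather than computational: it lies in making explicit that the ex ante dominant-strategy result of Theorem \ref{theorem_bne_cooperate_eth} is strong enough to imply incentive compatibility in the Bayesian sense, rather than only Nash-style stability. I would address this by briefly invoking the standard hierarchy (dominant-strategy equilibrium $\Rightarrow$ DSIC $\Rightarrow$ BIC) from \cite{shoham2008multiagent}, so that the corollary reduces to citing Theorem \ref{theorem_bne_cooperate_eth} and applying this implication. No further case analysis of rewards, penalties, or the inactivity leak is required, since all of that work has already been absorbed into Theorem \ref{theorem_bne_cooperate_eth}.
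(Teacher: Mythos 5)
Your proposal is correct and follows essentially the same route as the paper: the corollary is obtained as an immediate consequence of the ex ante dominance of $s_i^*(C)$ established in Theorem \ref{theorem_bne_cooperate_eth}, with no further case analysis of rewards, penalties, or the inactivity leak. Your version is somewhat more explicit than the paper's (which simply asserts the implication in one sentence), since you spell out the definition of incentive compatibility and the dominance-to-incentive-compatibility implication, but the underlying argument is the same.
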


Note that we utilize the previously proven Theorem \ref{theorem_bne_cooperate_eth} to assert that playing $C$ is an ex ante dominant strategy. Consequently, we conclude that the Eth2 Bayesian game, defined in this paper, is incentive-compatible, signifying that no agent has the incentive to deviate from playing $C$. \\

Overall, this paper presents three main contributions: \textbf{First}, Theorem \ref{theorem_bne_cooperate_eth} demonstrates that when agent $i$ plays $s_i^*(C)$, this strategy constitutes a BNE. \textbf{Second}, insights from Theorem \ref{theorem_bne_cooperate_eth} confirm that choosing $C$ is an ex ante dominant strategy for any agent $i$. \textbf{Finally}, through Corollary \ref{corollary_incentive_compatibility_eth}, we show that playing $C$ is the best response for agent $i$, regardless of the agent's type or the strategies of other agents, which further indicates that Eth2's protocol holds incentive compatibility.

\subsection{Open Questions to Our Model} \label{sect_incentive_analysis_openquestions} 
In this section, we explore several open questions that pertain to our game-theoretic analysis of Eth2.0 validator strategies. These open questions relate to specific aspects of our game model and external factors that might affect the generalizability and applicability of our findings. Acknowledging and understanding these open questions is crucial for accurately interpreting our results and may provide valuable insights for future research in algorithmic game theory and blockchain economics. 

\subsubsection{Maximal extractable value (MEV)}
In Eth2, maximal \linebreak extractable value (MEV) \cite{Ethereum_mev} allows validators to extract additional value by manipulating transaction orders within their proposed blocks, a practice that could destabilize the Eth2 consensus protocol if not managed correctly \cite{daian2020flash, Ethereum_mev}. To address the negative impact from MEV, the \emph{Proposer-builder separation} (PBS) \cite{Ethereum_PBS} has been suggested, which separates the roles of block proposers and builders to mitigate aggressive competition for MEV \cite{Ethereum_PBS, heimbach2023ethereum, wu2023strategic}. Although PBS has not yet been officially integrated into Eth2, tools like MEV-Boost \cite{Flashbot_mev_boost}, developed by Flashbots, have already demonstrated \linebreak their practical application. In fact, MEV-Boost is precisely a solution that Flashbots aims to address Eth2 MEV issues before the official adoption of PBS \cite{Flashbot_mev_boost, Ethereum_mev, wu2023strategic}. A comprehensive examination of MEV and PBS with Eth2 could provide valuable insights into their implications for validator strategies. However, such an analysis is beyond the scope of this paper and is recommended for future research.

\subsubsection{Delegated Proof of Stake (DPoS)}
Given that the requirement to deposit 32 ETH constitutes a significant barrier for general users wishing to become validators in Eth2, the DPoS has been proposed. DPoS facilitates participation by following users to stake varying amounts of ETH, constrained by the minimum and maximum limits set by DPoS platforms, such as  Coinbase \cite{DPoSCoinbase}, Kraken \cite{DPoSKraken}, or Crypto.com \cite{DPoSCryptocom}. Consequently, this enables broader participation in the validator process within Eth2 \cite{zhang2023rationally}. Additionally, DPoS provides a mechanism whereby users may transfer some of the financial burdens and potential penalties associated with the role of validator to the DPoS providers \cite{Ethereum_DPoS}. This transferability is likely to influence the strategic decisions of validators. Integrating DPoS as several actionable strategies within our model represents a promising direction for future expansion. 

\subsubsection{External Factors}
In this paper, we posit that validators choose their strategies based on the principle of individual rationality, striving to maximize their utilities \cite{von1947theory}. Based on this assumption, we hypothesize that should the incentives provided in Eth2 prove insufficient, validators are likely to disengage from the platform \cite{matsunaga2022incentivization}. Consequently, it is imperative for future research to investigate the impacts of external factors such as economic fluctuations, competition from other financial assets, and government regulatory policies on the strategic decisions of validators. Such investigations will benefit our model and ensure its alignment with real-world applications, thereby yielding more detailed insights into the strategic behaviors of Eth2 validators. 
\section{Related work}
\label{sect_related_work}

\textbf{Security Defenses of Ethereum 2.0}. Ethereum has already transitioned its consensus mechanism from \emph{PoW} to \emph{PoS}, defined its incentive mechanism, and stated how to discourage attacks on \linebreak Ethereum \cite{buterin2018discouragement, buterin2020incentives}. However, Ethereum is a complex platform, and its smart contract language is designed to support decentralized applications. Because of this complexity, many potential security vulnerabilities in Ethereum need to be further studied and evaluated. Researchers have discovered several vulnerabilities, described what attacks target these weaknesses of Ethereum, and then discussed how to defend against these threats. Some threats have already been resolved, yet there are still threats that need practical solutions. The \emph{Nothing-at-Stake} and \emph{Long-range} attacks are two main threats to \emph{PoS-based} blockchains \cite{chen2020survey}.

\textbf{\emph{Nothing-at-Stake}} attacks represent that because \emph{PoS} does not need to calculate the cost for mining as \emph{PoW} does, verifiers can create multiple blocks at multiple forks or participate in the verification without cost to improve their probability of obtaining benefits. This attack will lead to more and more forks in the \emph{PoS} blockchain and ultimately lead to the inability of all nodes to reach a consensus. \textbf{\emph{Long-range}} attacks refer to an attacker's attempt to establish a fork from a specific block and make it the longest chain to replace the current legal longest in the PoS blockchain. One possible way is for an attacker to obtain the private keys that were originally used to create blocks. These private keys are worthless for some verifiers who want to retrieve staking tokens. However, attackers can manipulate these private keys to create another longest chain from any existing block. 

To mitigate threats from "Nothing-at-Stake" and "Long-range" attacks, Ethereum adopts two approaches. Firstly, to prevent \linebreak "Nothing-at-Stake" attacks, Ethereum has increased the threshold for becoming a "validator" (currently 32 ETH) and imposes high penalties on validators who execute such attacks. As Saleh (2021) mentions, this mechanism effectively prevents "Nothing-at-Stake" attacks by raising the entry threshold and reducing rewards \cite{saleh2021blockchain}. Since participants hold tokens, they desire transactions to be confirmed quickly to secure long-term rewards. Secondly, Ethereum uses checkpoint and head block mechanisms to periodically determine the first block of each epoch, thereby avoiding "Long-range" attacks \cite{vitalikannotatedspec}. In fact, the implementation of these checkpoint and head block mechanisms critically depends on validators. Therefore, to encourage validators to complete tasks assigned in the consensus mechanism correctly and to make correct votes, reducing malicious behavior, Eth2 has designed a set of rewards and penalties, which constitute the incentive mechanism of ETH2. \\

\noindent \textbf{Analysis of Blockchain Participant Behavior.} However, the \emph{Verifier’s Dilemma} poses a significant challenge to the incentive mechanisms of PoW and PoS blockchains, as rational verifiers may deviate from consensus protocols to prioritize tasks offering the most significant benefits, influenced by the varying rewards for each task \cite{chen2020survey}. As noted by Aldweesh et al. (2018), miners may selectively prioritize specific tasks when income is not proportional to costs, potentially causing negative impacts on the blockchain's uninterrupted and dependable operation \cite{aldweesh2018performance}. Additionally, Alharby et al. (2020) discovered that rational miners are more inclined to mine new blocks rather than verify blocks mined by others based on analyses of existing smart contracts \cite{alharby2020data}. In this paper, we explore the following research question: \textbf{Is it better for validators in Eth2 to stay online continuously or to strategically go offline during certain slots, considering the potential benefits and risks associated with each strategy?} To address this question, we utilize a Bayesian game to determine the best strategy for validators to maximize their rewards and maintain Eth2's integrity. \\

\noindent \textbf{Game Theory Analysis in PoS Blockchains.} Researchers have devoted considerable effort to ensuring incentive compatibility in PoS-based blockchains using game theory techniques, which is pertinent to the research questions and methodologies of our study. For example, Saleh (2021) proposed, from the economics and game theory perspective, that a suitable \emph{PoS} incentive mechanism should meet two criteria: setting a sufficiently high threshold of becoming validators and effectively reducing block reward to promote consensus. Saleh used an extensive form game to analyze whether Eth2's reward scheme could resist the Nothing-at-stake attack and discussed potential issues, such as wealth concentration, double spending, and staking pooling based on their findings \cite{saleh2021blockchain}. However, Saleh did not consider validator types in a Bayesian game, which is the game model that the authors plan to study in their research. On the other hand, Motepalli and Jacobsen (2021) adapted the evolutionary game theory to analyze how participants' behavior can evolve with the reward mechanism. They concluded that penalties play a central role in maintaining the integrity of \linebreak blockchains. Their key result is that punishment is essential in building mechanisms for PoS blockchains, but their paper does not focus on analyzing Eth2's mechanism. It does not consider behaviors from different validator types \cite{motepalli2021reward}. Additionally, Salau et al. (2022) utilized an infinitely repeated game model to examine the competitive interactions among Eth2’s validators and ensure they act honestly to avoid malicious activities by encouraging validators to report deviation actions and setting the discount factor as close to one as possible \cite{9921952}. 

However, setting the discount factor close to one is not always reasonable in actual PoS blockchains, as validators might favor short-term gains over long-term profits due to reasons like liquidity needs or risk preferences. In contrast, we believe that employing a Bayesian game will allow us to ascertain that cooperation remains a dominant strategy in a one-shot game that models the Eth2, an analysis more in line with the real-world PoS blockchains. Further, Bhudia et al. (2023) utilized a sub-game to model paying ransom to avoid slashing in Eth2, which is the best response for validators who are threatened by attackers \cite{bhudia2023game}. Nonetheless, they assume the attack will be successful only if the attackers build a smart contract to interact with the victims, which might not be entirely reasonable in the context of the real-world Eth2. Also, in \cite{bhudia2023game}, a Bayesian game, which could potentially provide a more comprehensive analysis compared to a sub-game, given that the scope of a sub-game is limited to proposed sub-cases, was not employed. Similarly, Chaidos et al. (2023) indicate that when blockchain systems allow users to "retract" their commitment - that is, users initially commit to performing the tasks of validators but later can choose whether or not to fulfill the assigned tasks - the system can achieve a state of equilibrium. This study underscores the importance of designing suitable incentive mechanisms to encourage users to actively participate and fulfill the responsibilities of validators, even in the presence of potential free-riding behavior \cite{chaidos2023blockchain}. However, the classical strategic game has certain limitations in expressing the impact of user heterogeneity on strategy choices. When analyzing equilibrium, considering the impact of different user types on strategy choices is crucial. Bayesian games, by introducing different types of users and the probability distribution of their strategy choices, can capture this heterogeneity in a more nuanced way. Moreover, compared to the classical strategic game, which assumes that each player has complete information, Bayesian games consider a more realistic scenario where players often have incomplete information in real-world situations. \\

\noindent \textbf{Incorporating Bayesian Approaches in Blockchain Mechanism Design.} Given that Bayesian models provide a robust framework for handling incomplete information, such models are particularly suitable for analyzing validators' strategies in blockchain systems or developing blockchain-based solutions in various domains \cite{liu2019survey, shi2022integration}. Liu et al. (2019) discussed how a Bayesian game could be used to model not only the risk level of blockchain \linebreak providers but also the incomplete information of miners and service providers in edge computing \cite{liu2019survey}. Yan et al. (2020) \cite{yan2020dynamic} and Chen et al. (2022) \cite{chen2022bayesian} proposed Bayesian game approaches to develop a transaction fee allocation mechanism for blockchains. Additionally, Xia et al. applied a Bayesian Game-based solution to determine the electricity pricing in a vehicle-to-vehicle electricity trading scheme \cite{xia2020bayesian}. Zhang et al. (2023) utilized a Bayesian network model to develop a \emph{Validator Selection Game}, helping clients decide how to choose their DPoS validators \cite{zhang2023rationally}. Liao et al. (2023) introduced a novel reward scheme to address the free-rider issue in Algorand's incentive mechanism, modeled as a Bayesian game, ensuring that participation in the protocol achieves a Bayesian Nash equilibrium, even with a malicious adversary present \cite{liao2023irs}. These works demonstrate how Bayesian approaches can be effectively applied to blockchain systems to model the incomplete information among agents, similar to our approach in this paper, where we model interactions between validators in Eth2. Consequently, we employ a Bayesian model to investigate whether Eth2 possesses incentive compatibility by verifying the presence of a BNE and an ex ante dominant strategy when validators interact. Our findings suggest that for validators, deviating from the Eth2 protocol to take their devices offline to avoid specific tasks and save costs is not the best response.

\section{Conclusion and Future Works}
\label{sect_conclusion}
In this paper, we achieved three main contributions: First, we comprehensively summarized essential knowledge regarding the rewards, penalties, and cost models of Eth2 from various sources. This serves as a foundation for our readers to understand the context of our study. Second, we utilized a Bayesian game framework to model the strategies of validators to explore the BNE and dominant strategies in a single slot of Eth2. Our analysis demonstrates that cooperation (i.e., keeping their devices online) can achieve a BNE and an ex ante dominant strategy in Eth2. Lastly, by establishing an ex ante dominant strategy in Eth2, we derived a corollary asserting that Eth2’s incentive mechanism is incentive-compatible. Our analysis, grounded in a rigorous game model and detailed calculations, provides an objective and robust substantiation of our claim while laying a sturdy groundwork for future examinations of Eth2's incentive mechanism and offering valuable insights for individuals contemplating participation as validators in Eth2. However, aside from the open questions mentioned in Section \ref{sect_incentive_analysis_openquestions}, we currently face two limitations: First, the exclusion of the validators' roles in the sync committee; second, the need to extend our model to include a more detailed analysis of validators' strategies over multiple epochs, since past rewards or penalties might influence future validator strategies, rather than treating each epoch as isolated. These areas present opportunities for future exploration. Despite these open questions and limitations, we firmly believe this research lays a solid foundation for further exploring the impact of validators’ strategies in Eth2.


\begin{acks}
This study was funded by Ripple Labs and the Natural Sciences and Engineering Research Council of Canada (NSERC). We also thank Dr. Seyed Majid Zahedi, Dr. Kate Larson, and Dr. Mahesh Tripunitara \wg{at} the University of Waterloo for providing valuable insights on earlier versions of this work. Additionally, we are grateful to the anonymous reviewers for their valuable suggestions and feedback.
\end{acks}

\bibliographystyle{ACM-Reference-Format}
\bibliography{sample-base}


\end{document}